\title{Possibilistic investment models with background risk}
\author{Irina Georgescu \\ \footnotesize Academy of Economic Studies\\ \footnotesize Department of Economic Cybernetics\\ \footnotesize Pia$\c{t}$a Romana No 6  R 70167, Oficiul Postal 22, Bucharest, Romania \\
 \footnotesize Email: irina.georgescu@csie.ase.ro}
\date{}
\begin{document}
\maketitle

\begin{abstract}
In the study of investment problem, aside from the investment risk the background risk appears. Both the investment risk and the background risk are probabilistically described by random variables. This paper starts from the hypothesis that the two types of risk can be represented both probabilistically (by random variables) and possibilistically (by fuzzy numbers). We will study three models in which the investment risk and the background risk can be: fuzzy numbers, a random variable -- a fuzzy number and a fuzzy number -- a random variable. A portfolio problem is formulated for each model and an approximate calculation formula of the optimal solution is proved.
\end{abstract}

\textbf{Keywords}: possibilistic risk aversion, background risk, fuzzy numbers

\newtheorem{definitie}{Definition}[section]
\newtheorem{propozitie}[definitie]{Proposition}
\newtheorem{remarca}[definitie]{Remark}
\newtheorem{exemplu}[definitie]{Example}
\newtheorem{intrebare}[definitie]{Open question}
\newtheorem{lema}[definitie]{Lemma}
\newtheorem{teorema}[definitie]{Theorem}
\newtheorem{corolar}[definitie]{Corollary}

\newenvironment{proof}{\noindent\textbf{Proof.}}{\hfill\rule{2mm}{2mm}\vspace*{5mm}}

\section{Introduction}

The effect of background risk on investors'decisions is a topic
which appears frequently in the risk management literature (see
\cite{eeckhoudt2}, \cite{guido} for several references). The models
studied in \cite{doherty}, \cite{eeckhoudt1}, \cite{eeckhoudt2},
\cite{guido}, \cite{laffont} study the investment risk and the
background risk, their influence on the choice of optimal
portfolios, the calculation of optimal solutions, etc. . A
fundamental hypothesis of these papers is that both the investment
risk and the background risk are random variables.

On the other hand, in the last years risk management models built in
the framework of Zadeh's possibility theory \cite{zadeh1} appeared.
Possibility distributions replace random variables, and usual
probabilistic indicators (expected value, variance, covariance) are
replaced by adequate possibilistic indicators \cite{appadoo},
\cite{carlsson2}, \cite{carlsson3}, \cite{dubois2}, \cite{fuller1}.
An important class of possibility distributions is represented by
fuzzy numbers \cite{dubois1}, \cite{dubois2} and most of these
models are based on them.

This paper proposes an approach of background risk models in a possibilistic context. We will notice four ways of connecting the investment risk and the background risk in the building of a model.

\begin{table}[ht]
\centering
\begin{tabular}{c c c}
\hline \hline
   & Investment risk & Background risk \\[0.5ex]
\hline
1 & probabilistic & probabilistic \\
2 & possibilistic  & possibilistic \\
3 & possibilistic & probabilistic \\
4 & probabilistic & possibilistic \\[1ex]
\hline
\end{tabular}
\label{table: nonlin}
\end{table}

Case $1$ is treated in \cite{eeckhoudt2}, p. 68. The other three cases will be studied in this paper. For each model an optimization problem in terms of possibilistic expected utility \cite{georgescu3} or mixed expected utility \cite{georgescu4} will be formulated and an appoximate calculation formula of the solution will be proved.

The paper is organized as follows.

Section $2$ presents fuzzy numbers, their operations and indicators (\cite{carlsson2}, \cite{dubois1}, \cite{dubois2}, \cite{georgescu2}). In Section $3$ multidimensional possibilistic expected utility, mixed expected utility and some of their properties (\cite{georgescu3}, \cite{georgescu4}) are recalled. These two notions will be used for defining the objective functions of the models in Sections $5-7$.

Section $4$ contains a two--asset possibilistic model without the background risk. It is the possibilistic analogue of the probabilistic model of
\cite{eeckhoudt2} and it represents a starting point for the topics developed in the next sections. Investing an initial wealth between a risk--free asset (bonds) and a risky asset (stocks), an agent should find out a portfolio bringing him a maximal gain. The model is characterized by the fact that the return of stocks is a fuzzy number and not a random variable as in \cite{eeckhoudt2}. A portfolio problem in possibilistic terms is formulated, its optimal solution and its approximate calculation are studied.

In Section $5$ this investment model is enriched with possibilistic background risk. Here both the investment risk and the background risk will be fuzzy numbers. The portfolio value will be expressed with respect to these two fuzzy numbers and the objective function of the model will be a possibilistic expected utility in the sense of \cite{georgescu3}. An approximate value $\alpha^{\ast \ast}$ of the optimal solution will be expressed w.r.t.
the approximate value $\alpha^\ast$ of the optimal solution of the problem of Section $4$ and some possibilistic indicators of fuzzy numbers (expected value, variance, covariance).

The investment models in the next sections combine probability theory and possibility theory: in Section 6 the investment risk is a fuzzy number and the background risk is a random variable, while in Section 7 the investment risk becomes a random variable and the background risk becomes a fuzzy number. The objective functions of the two models are built as mixed expected utilities \cite{georgescu4} and the approximate solutions of the optimization problems are expressed by combinations of possibilistic and probabilistic indicators.

In all the models of this paper the way the optimal solution changes with the investor's risk aversion is also studied.

\section{Possibilistic indicators of fuzzy numbers}

In this section we recall from \cite{dubois1}, \cite{dubois2}, \cite{carlsson2} the definition of fuzzy numbers, their operations and some of their properties. Also we will present the main indicators of fuzzy numbers (expected value, variance and covariance) (see \cite{carlsson1}, \cite{carlsson3}, \cite{dubois3}, \cite{majlender}, \cite{thavaneswaran1}, \cite{zhang2}).

Let $X$ be a set of states. A {\em fuzzy subset} of $X$ is a function $A: X \rightarrow [0, 1]$. A fuzzy set $A$ is {\em normal} if there exists $x \in X$ such that $A(x)=1$.  The {\em support} of  a fuzzy set $A$ is $supp(A)=\{x \in X| A(x)>0\}$.

We consider $X={\bf R}$. For $\gamma \in [0, 1]$, the $\gamma$--level set $[A]^\gamma$ of a fuzzy subset $A$ of ${\bf R}$ is defined by

$[A]^\gamma=\left\{\begin{array}{rcl}
\{x \in {\bf{R}}| A(x) \geq  \gamma\}& \mbox{if}& \gamma > 0 \\
cl (supp(A)) & \mbox{if}& \gamma =0
\end{array}\right.$

cl(supp(A)) is the topological closure of the set $supp(A) \subseteq {\bf R}$.

The fuzzy set $A$ is called {\em fuzzy convex} if $[A]^\gamma$ is a convex subset in ${\bf R}$ for any $\gamma \in [0, 1]$.

\begin{definitie}
A fuzzy subset $A$ of ${\bf R}$ is called {\em fuzzy number} if $A$ is normal, fuzzyconvex, continuous and with bounded support.
\end{definitie}

If $A$ is a  fuzzy number, $[A]^\gamma=[a_1(\gamma), a_2(\gamma)]$ for all $\gamma \in [0, 1]$ and $[a_1(0), a_2(0)]$ is the support of $A$. A {\em fuzzy point} is a fuzzy number $A$ with the support having one element.

Let $A, B$ be two fuzzy numbers and $\lambda \in {\bf R}$. By applying Zadeh's extension principle \cite{zadeh2} we define the fuzzy numbers  $A+ B$   and $\lambda A$ by

$(A+B)(x)=\displaystyle \sup_{y+z=x} \min (A(y), B(z))$

$(\lambda A)(x)=\displaystyle \sup_{\lambda y=x} A(y)$

In this way the operations with real numbers extend to operations with fuzzy numbers. Most of the properties of real numbers are preserved for real numbers.

If $[A]^\gamma=[a_1(\gamma), a_2(\gamma)]$, $[B]^\gamma=[b_1(\gamma), b_2(\gamma)]$ then $[A+B]^\gamma=[a_1(\gamma)+b_1(\gamma), a_2(\gamma)+b_2(\gamma)]$,  $[\lambda A]^\gamma=[\lambda a_1(\gamma), \lambda a_2(\gamma)]$ if $\lambda \geq 0$ and $[\lambda A]^\gamma=[\lambda a_2(\gamma), \lambda a_1(\gamma)]$ if $\lambda < 0$.

If $A_1, \ldots, A_n$ are fuzzy numbers and $\lambda_1, \ldots, \lambda_n \in {\bf R}$ then one can consider the fuzzy number $\displaystyle \sum_{i=1}^n \lambda_i A_i$.

A non--negative and monotone increasing function $f: [0, 1] \rightarrow {\bf{R}}$ is a {\em weighting function} if it
satisfies the normality condition $\int_0^1 f(\gamma)d\gamma=1$.

We fix a fuzzy number $A$ and a weighting function $f$ such that $[A]^\gamma=[a_1(\gamma), a_2(\gamma)]$ for all $\gamma \in [0, 1]$.

\begin{definitie}
\cite{fuller1}
The $f$--{\em weighted possibilistic expected value} of $A$ is defined by

$E(f, A)=\frac{1}{2} \int_0^1 (a_1(\gamma)+a_2(\gamma))f(\gamma)d\gamma$.
\end{definitie}

If $f(\gamma)=2\gamma$ for $\gamma \in [0, 1]$ then $E(f, A)$ is the possibilistic mean value introduced in \cite{carlsson1}.

\begin{propozitie}
If $A_1, \ldots, A_n$ are fuzzy numbers and $\lambda_1, \ldots, \lambda_n \in {\bf R}$ then $\displaystyle E(f, \sum_{i=1}^n \lambda_i A_i)=\sum_{i=1}^n \lambda_i E(f, A_i)$.
\end{propozitie}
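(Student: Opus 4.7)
The plan is to combine the explicit level-set formulas for $A+B$ and $\lambda A$ recalled just before the statement with the definition of $E(f,\cdot)$, and to bootstrap from the two-operand case to the $n$-operand linear combination by a routine induction. So the whole thing splits into three short pieces: additivity, homogeneity, and induction.

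For additivity, fix fuzzy numbers $A,B$ with $[A]^\gamma=[a_1(\gamma),a_2(\gamma)]$ and $[B]^\gamma=[b_1(\gamma),b_2(\gamma)]$. The excerpt gives $[A+B]^\gamma=[a_1(\gamma)+b_1(\gamma),\,a_2(\gamma)+b_2(\gamma)]$. Substituting into the definition of $E(f,\cdot)$ and splitting the integral via linearity of the Lebesgue integral yields $E(f,A+B)=E(f,A)+E(f,B)$ immediately.

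For homogeneity I would separate $\lambda\ge 0$ and $\lambda<0$, because the level-set endpoints swap roles in the negative case. When $\lambda\ge 0$, one has $[\lambda A]^\gamma=[\lambda a_1(\gamma),\lambda a_2(\gamma)]$, so $E(f,\lambda A)=\tfrac{1}{2}\int_0^1(\lambda a_1(\gamma)+\lambda a_2(\gamma))f(\gamma)\,d\gamma=\lambda E(f,A)$. When $\lambda<0$, the formula $[\lambda A]^\gamma=[\lambda a_2(\gamma),\lambda a_1(\gamma)]$ applies, but the sum $\lambda a_2(\gamma)+\lambda a_1(\gamma)$ is the same as before, so again $E(f,\lambda A)=\lambda E(f,A)$. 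The point worth flagging is precisely this case split: had we used only one of the two level-set formulas, the argument would look wrong for $\lambda<0$, but the symmetry of the expression $a_1+a_2$ saves us.

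Finally, a straightforward induction on $n$ combines these. The base $n=1$ is homogeneity; for the step, write $\sum_{i=1}^{n+1}\lambda_iA_i=\bigl(\sum_{i=1}^{n}\lambda_iA_i\bigr)+\lambda_{n+1}A_{n+1}$, apply additivity to separate the two summands, then homogeneity and the induction hypothesis to each. There is no real obstacle here; the proposition is essentially a direct consequence of the level-set arithmetic already stated in the section.
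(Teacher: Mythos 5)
Your proof is correct, and the case split for $\lambda<0$ (where $[\lambda A]^\gamma=[\lambda a_2(\gamma),\lambda a_1(\gamma)]$ but the sum of endpoints is still $\lambda(a_1(\gamma)+a_2(\gamma))$) is exactly the one subtlety worth flagging. The paper states this proposition without proof, recalling it from the cited literature, so there is no in-paper argument to compare against; your route via the level-set arithmetic, the symmetry of $a_1+a_2$, and induction on $n$ is the standard and expected one.
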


\begin{remarca}
If $A$ is a fuzzy number then $E(f, A) \in [a_1(0), a_2(0)]=supp(A)$.
\end{remarca}

\begin{definitie}
\cite{appadoo}, \cite{zhang2} The $f$--{\em weighted possibilistic variance} of $A$ is defined by

$Var(f, A)=\frac{1}{2} \int_0^1 [(a_1(\gamma)-E(f, A))^2+(a_2(\gamma)-E(f, A))^2]f(\gamma)d\gamma$.
\end{definitie}

\begin{definitie}
\cite{appadoo}, \cite{zhang2} Let $A, B$ be two fuzzy numbers such that $[A]^\gamma=[a_1(\gamma), a_2(\gamma)]$ and
$[B]^\gamma=[b_1(\gamma), b_2(\gamma)]$ for any $\gamma \in [0, 1]$. The $f$--weighted covariance of $A$ and $B$
 is defined  by

$Cov(f, A, B)=\frac{1}{2} \int_0^1 [(a_1(\gamma)-E(f, A))(b_1(\gamma)-E(f, B))+(a_2(\gamma)-E(f, A))(b_2(\gamma)-E(f, B))]f(\gamma)d\gamma$.
\end{definitie}

For $f(\gamma)=2\gamma$, $\gamma \in [0, 1]$ the notions of possibilistic variance and possibilistic covariance appear in \cite{carlsson1}.

\begin{propozitie}
\cite{georgescu2} $Cov(f, A, B)=\frac{1}{2} \int_0^1 [a_1(\gamma) b_1(\gamma)+a_2(\gamma)b_2(\gamma)]f(\gamma)d\gamma-E(f, A)E(f, B)$
\end{propozitie}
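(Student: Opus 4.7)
The plan is to prove the identity by direct expansion of the bilinear form in the definition of $Cov(f,A,B)$, together with two simple facts already available in the excerpt: the definition $E(f,A)=\frac12\int_0^1(a_1(\gamma)+a_2(\gamma))f(\gamma)d\gamma$ (and the analogous formula for $B$), and the normality condition $\int_0^1 f(\gamma)d\gamma=1$.

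First I would expand the two products under the integral sign:
\begin{align*}
(a_1(\gamma)-E(f,A))(b_1(\gamma)-E(f,B)) &= a_1(\gamma)b_1(\gamma) - a_1(\gamma)E(f,B) - b_1(\gamma)E(f,A) + E(f,A)E(f,B),\\
(a_2(\gamma)-E(f,A))(b_2(\gamma)-E(f,B)) &= a_2(\gamma)b_2(\gamma) - a_2(\gamma)E(f,B) - b_2(\gamma)E(f,A) + E(f,A)E(f,B),
\end{align*}
and then add them to obtain an integrand of the form
\[
a_1(\gamma)b_1(\gamma)+a_2(\gamma)b_2(\gamma) - E(f,B)(a_1(\gamma)+a_2(\gamma)) - E(f,A)(b_1(\gamma)+b_2(\gamma)) + 2E(f,A)E(f,B).
\]

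Next I would multiply by $\tfrac12 f(\gamma)$ and integrate over $[0,1]$, distributing the integral across the four summands. The first summand reproduces the desired term $\tfrac12\int_0^1[a_1(\gamma)b_1(\gamma)+a_2(\gamma)b_2(\gamma)]f(\gamma)d\gamma$. Using the definition of $E(f,\cdot)$, the middle two summands each contribute $-E(f,A)E(f,B)$, while the constant last summand becomes $E(f,A)E(f,B)\int_0^1 f(\gamma)d\gamma = E(f,A)E(f,B)$ by the normality of the weighting function. Combining, $-E(f,A)E(f,B)-E(f,A)E(f,B)+E(f,A)E(f,B)=-E(f,A)E(f,B)$, which gives the claimed formula.

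The argument is entirely routine: no analytic subtlety arises since integrability is guaranteed by the boundedness of the support of the fuzzy numbers and the integrability of $f$. Consequently there is no genuine obstacle; the only care needed is bookkeeping in the expansion and recognizing that the normalization $\int_0^1 f=1$ converts the cross terms into two copies of $E(f,A)E(f,B)$ that cancel one copy of the constant term.
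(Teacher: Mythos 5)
Your proof is correct: the expansion of the bilinear integrand, the use of the definition of $E(f,\cdot)$ on the cross terms, and the normalization $\int_0^1 f(\gamma)d\gamma=1$ on the constant term are exactly the standard computation, mirroring the classical identity $Cov(X,Y)=E(XY)-E(X)E(Y)$. The paper itself gives no proof of this proposition (it is cited from \cite{georgescu2}), so there is nothing to contrast with; your argument is the natural one and is complete.
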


\begin{corolar}
$Var(f, A)=\frac{1}{2} \int_0^1 [a_1^2(\gamma)+a_2^2(\gamma)]f(\gamma)d\gamma-E^2(f, A)$
\end{corolar}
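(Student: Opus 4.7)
The statement to prove is the identity
\[
Var(f, A)=\frac{1}{2}\int_0^1 [a_1^2(\gamma)+a_2^2(\gamma)]f(\gamma)d\gamma-E^2(f, A),
\]
and the natural plan is simply to specialize the covariance identity of Proposition~2.7 to the diagonal case $B=A$. Comparing Definition~2.5 with Definition~2.6, the variance is literally the covariance of $A$ with itself: writing $b_i(\gamma)=a_i(\gamma)$ and $E(f,B)=E(f,A)$ in the integrand of $Cov(f,A,B)$ produces exactly the integrand defining $Var(f,A)$. Once this observation is made, substituting $B=A$ into the conclusion of Proposition~2.7 immediately yields the corollary.

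If one preferred a direct verification that does not cite Proposition~2.7, the plan would be to expand the squares inside the definition of $Var(f,A)$. Writing
\[
(a_i(\gamma)-E(f,A))^2 = a_i^2(\gamma) - 2 a_i(\gamma) E(f,A) + E^2(f,A)
\]
for $i=1,2$, summing, and integrating against $\tfrac{1}{2}f(\gamma)$, the first block produces the claimed quadratic term; the cross term contributes $-E(f,A)\cdot \int_0^1 (a_1(\gamma)+a_2(\gamma))f(\gamma)d\gamma = -2E^2(f,A)$ by Definition~2.2; and the constant term contributes $E^2(f,A)\cdot \int_0^1 f(\gamma)d\gamma = E^2(f,A)$ by the normality condition on the weighting function. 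Combining these three contributions gives exactly $\tfrac{1}{2}\int_0^1[a_1^2+a_2^2]f\,d\gamma - E^2(f,A)$.

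There is essentially no obstacle in either route: the corollary is a direct algebraic consequence of one earlier definition (Definitions 2.2 and 2.5) together with the normalization $\int_0^1 f(\gamma)d\gamma = 1$, or equivalently of the already-established Proposition~2.7. The cleanest presentation is therefore the one-line reduction $Var(f,A)=Cov(f,A,A)$, followed by a direct appeal to Proposition~2.7.
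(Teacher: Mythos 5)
Your proposal is correct and matches the paper's intended derivation: the corollary appears immediately after Proposition~2.7 with no separate proof, precisely because it is the specialization $B=A$ (so that $Var(f,A)=Cov(f,A,A)$), which is exactly your primary route. Your alternative direct expansion, using $\int_0^1(a_1(\gamma)+a_2(\gamma))f(\gamma)d\gamma=2E(f,A)$ and the normalization $\int_0^1 f(\gamma)d\gamma=1$, is also a valid check but adds nothing essential.
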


\section{Multidimensional expected utilities}

In this section we recall the definitions and some properties of multidimensional possibilistic expected utility \cite{georgescu3} and mixed expected utility \cite{georgescu4}.

Let $u: {\bf{R}}^n \rightarrow {\bf{R}}$ be an $n$--dimensional utility function of class $C^2$. If $\vec{X}=(X_1, \ldots, X_n)$ is a random vector then $u(\vec{X})=u(X_1, \ldots, X_n)$ is a random variable and its expected value $M(u(\vec{X}))$ \footnote{When a random variable $X$ appears it can be inferred that it is reported to a probability space $(\Omega, \mathcal{K}, P)$ where $\Omega \subseteq {\bf {R}}$. The expected value of $X$ will be denoted $M(X)$.}  is called the (probabilistic) expected utility of $\vec{X}$ w.r.t. $u$.

The possibilistic correspondent of probabilistic expected utility is the possibilistic expected utility associated with a possibilistic vector, a weighting function and a utility function.

A {\emph possibilistic vector} has the form $\vec{A}=(A_1, \ldots, A_n)$ where each component $A_i$ is a fuzzy number.

We fix a weighting function $f$ and a utility function $u: {\bf{R}}^n \rightarrow {\bf{R}}$ \footnote{All utility functions in this paper will have the class $C^2$.}. Let $\vec{A}=(A_1, \ldots, A_n)$ be a possibilistic vector such that $[A_i]^\gamma=[a_i(\gamma), b_i(\gamma)]$ for any $\gamma \in [0, 1]$ and $i=1, \ldots, n$. We denote $\vec{a}(\gamma)=(a_1(\gamma), \ldots, a_n(\gamma))$ and $\vec{b}(\gamma)=(b_1(\gamma), \ldots, b_n(\gamma))$.

\begin{definitie}
\cite{georgescu3} The possibilistic expected utility of $\vec{A}$ w.r.t. $f$ and $u$ is defined by

(1) $E(f, u(\vec{A}))=\frac{1}{2} \int_0^1 [u(\vec{a}(\gamma)) + u(\vec{b}(\gamma))] f(\gamma) d\gamma$.
\end{definitie}

\begin{remarca}
If $n=1$ the definition of unidimensional expected utility from \cite{georgescu1} is found:

(2) $E(f, u(A_1))=\frac{1}{2} \int_0^1 [u(a_1(\gamma))+u(b_1(\gamma))]f(\gamma)d\gamma$.
\end{remarca}

\begin{remarca}
Let $n=1$.

(i) If $u : {\bf{R}} \rightarrow {\bf{R}}$ is the identity function then $E(f, u(A_1))=E(f, A_1)$.

(ii) If $u(x)=(x-E(f, A))^2$ for all $x \in {\bf R}$ then $E(f, u(A_1))=Var(f, A_1)$.

(iii) If $\lambda \in {\bf R}$ and $u(x)=\lambda$ for all $x \in {\bf R}$ then $E(f, u(A_1))=\lambda$.
\end{remarca}

\begin{remarca}
Let $n=2$ and $u(x, y)=(x-E(f, A_1))(y-E(f, A_2))$ for all $x, y \in {\bf R}$. Then $E(f, u(A_1, A_2))=Cov(f, A_1, A_2)$.
\end{remarca}

\begin{propozitie}
\cite{georgescu3} Let $g, h$ be two $n$--dimensional utility functions and $a, b \in {\bf R}$. If $u=ag+bh$ then $E(f, u(\vec{A}))=aE(f, g(\vec{A}))+bE(f, h(\vec{A}))$.
\end{propozitie}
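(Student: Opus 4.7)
The plan is to verify the identity by a direct unwinding of Definition~3.1, since everything reduces to the linearity of the Riemann integral. Write $\vec{A} = (A_1, \ldots, A_n)$ with $[A_i]^\gamma = [a_i(\gamma), b_i(\gamma)]$, and set $\vec{a}(\gamma) = (a_1(\gamma), \ldots, a_n(\gamma))$, $\vec{b}(\gamma) = (b_1(\gamma), \ldots, b_n(\gamma))$ as in the preamble to the definition.

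First I would note the pointwise identity: since $u = ag + bh$ as functions on $\mathbf{R}^n$, for every $\vec{x} \in \mathbf{R}^n$ we have $u(\vec{x}) = ag(\vec{x}) + bh(\vec{x})$. Applying this at $\vec{x} = \vec{a}(\gamma)$ and at $\vec{x} = \vec{b}(\gamma)$ yields
$$u(\vec{a}(\gamma)) + u(\vec{b}(\gamma)) = a\bigl[g(\vec{a}(\gamma)) + g(\vec{b}(\gamma))\bigr] + b\bigl[h(\vec{a}(\gamma)) + h(\vec{b}(\gamma))\bigr]$$
for each $\gamma \in [0,1]$.

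Then I would multiply both sides by $\tfrac{1}{2}f(\gamma)$, integrate over $[0,1]$, and invoke the linearity of the integral to split the right-hand side into two pieces. By Definition~3.1 applied separately to $g$ and to $h$, the two pieces are exactly $aE(f, g(\vec{A}))$ and $bE(f, h(\vec{A}))$, giving the claimed identity. There is no real obstacle here: the statement is simply the linearity of $u \mapsto E(f, u(\vec{A}))$ inherited from the linearity of $\int_0^1 (\cdot)\, f(\gamma)\, d\gamma$, and no properties of the $\gamma$-level endpoints $a_i(\gamma), b_i(\gamma)$ beyond measurability (guaranteed by continuity of $u, g, h$ of class $C^2$ composed with the continuous endpoint functions of fuzzy numbers) are needed.
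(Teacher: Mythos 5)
Your proof is correct and is exactly the standard argument: the identity is an immediate consequence of the pointwise relation $u=ag+bh$ together with the linearity of the integral in Definition~3.1. The paper itself gives no proof (it only cites the result from the reference), but your direct unwinding is the natural and complete justification.
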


A {\emph {mixed vector}} has the form $(\vec{A}, \vec{X})=(A_1, \ldots, A_n, X_1, \ldots, X_m)$ where $\vec{A}=(A_1, \ldots, A_n)$ is a possibilistic vector and $\vec{X}=(X_1, \ldots, X_m)$ is a random vector.

Let $u : {\bf{R}}^{n+m} \rightarrow {\bf{R}}$ be a utility function and $(\vec{A}, \vec{X})$ a mixed vector. Assume that $[A_i]^\gamma=[a_i(\gamma), b_i(\gamma)]$ for $\gamma \in [0, 1]$ and $i=1, \ldots, n$. For any $\gamma \in [0, 1]$ we denote

$u(\vec{a}(\gamma), \vec{X})(\omega)=u(a_1(\gamma), \ldots, a_n(\gamma), X_1(\omega), \ldots, X_n(\omega))$, $\omega \in \Omega$

$u(\vec{b}(\gamma), \vec{X})(\omega)=u(b_1(\gamma), \ldots, b_n(\gamma), X_1(\omega), \ldots, X_n(\omega))$, $\omega \in \Omega$

By these we define two functions $u(\vec{a}(\gamma), \vec{X}) : \Omega \rightarrow {\bf R}$ and $u(\vec{b}(\gamma), \vec{X}) : \Omega \rightarrow {\bf R}$ which obviously are random vectors.

\begin{definitie}
\cite{georgescu4} Let $(\vec{A}, \vec{X})$ be a mixed vector. The {\emph {mixed expected utility}} of $(\vec{A}, \vec{X})$ w.r.t. $f$ and $u$ is defined by

(3) $E(f, u(\vec{A}, \vec{X}))=\frac{1}{2} \int_0^1 [M(u(\vec{a}(\gamma), \vec{X}))+M(u(\vec{b}(\gamma), \vec{X}))]f(\gamma)d\gamma$.
\end{definitie}

The mixed expected utility generalizes both possibilistic expected utility and probabilistic expected utility.

\begin{propozitie}
\cite{georgescu4} Let $g, h$ be two $(m+n)$--dimensional utility
functions and $a, b \in {\bf{R}}$. If $u=ag+bh$ then $E(f,
u(\vec{A}, \vec{X}))=aE(f, g(\vec{A}, \vec{X}))+bE(f, h(\vec{A},
\vec{X}))$.
\end{propozitie}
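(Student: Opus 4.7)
The plan is to unfold the definition (3) of mixed expected utility and reduce the claim to two standard linearity facts: (a) linearity of pointwise evaluation of $u=ag+bh$ on the random vector, and (b) linearity of the probabilistic expected value $M$. Both are applied at each fixed level $\gamma \in [0,1]$, and then linearity of the Lebesgue integral in $\gamma$ finishes the argument.

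First, I would fix $\gamma \in [0,1]$ and use the hypothesis $u=ag+bh$ to rewrite the random variable $u(\vec{a}(\gamma), \vec{X})$ pointwise on $\Omega$: for every $\omega \in \Omega$,
\[
u(\vec{a}(\gamma), \vec{X})(\omega) = a\, g(\vec{a}(\gamma), \vec{X})(\omega) + b\, h(\vec{a}(\gamma), \vec{X})(\omega),
\]
so $u(\vec{a}(\gamma), \vec{X}) = a\, g(\vec{a}(\gamma), \vec{X}) + b\, h(\vec{a}(\gamma), \vec{X})$ as random variables on $(\Omega, \mathcal{K}, P)$. Applying linearity of $M$ yields
\[
M(u(\vec{a}(\gamma), \vec{X})) = a\, M(g(\vec{a}(\gamma), \vec{X})) + b\, M(h(\vec{a}(\gamma), \vec{X})),
\]
and the same identity with $\vec{b}(\gamma)$ in place of $\vec{a}(\gamma)$.

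Next, I would sum the two $\gamma$-level identities, multiply by $\tfrac{1}{2}f(\gamma)$, integrate over $[0,1]$, and invoke linearity of the integral to separate the $a$-term from the $b$-term. Using formula (3) for each of $u$, $g$, and $h$, the right-hand side becomes exactly $a E(f, g(\vec{A}, \vec{X})) + b E(f, h(\vec{A}, \vec{X}))$, which is the desired equality.

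There is no real obstacle here; the statement is essentially a bookkeeping consequence of two linearities (of $M$ and of $\int_0^1 \cdot\, f(\gamma)d\gamma$) that are transported through the two-step construction of mixed expected utility. The only point that requires minimal care is making explicit that $u=ag+bh$ means $u(\vec{z})=a\,g(\vec{z})+b\,h(\vec{z})$ for every $\vec{z} \in \mathbb{R}^{n+m}$, so the identity indeed holds after substituting $\vec{z}=(\vec{a}(\gamma), \vec{X}(\omega))$ and $\vec{z}=(\vec{b}(\gamma), \vec{X}(\omega))$; once this is noted, the proof is a direct unfolding of definitions, exactly parallel to the proof of the earlier Proposition for the purely possibilistic expected utility.
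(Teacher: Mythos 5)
Your proof is correct: unfolding Definition 3.6 and pushing the identity $u=ag+bh$ through the pointwise evaluation, then through the linearity of $M$ and of $\int_0^1 \cdot\, f(\gamma)\,d\gamma$, is exactly what is needed, and your remark that $u=ag+bh$ must be read as a pointwise identity on ${\bf R}^{n+m}$ before substituting $(\vec{a}(\gamma),\vec{X}(\omega))$ and $(\vec{b}(\gamma),\vec{X}(\omega))$ is the only step requiring care. Note that the paper itself gives no proof of this proposition (it is quoted from the cited reference on mixed expected utility), so there is nothing to compare against; your argument is the standard one and parallels the proof of the corresponding linearity property for the purely possibilistic expected utility.
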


\section{A possibilistic investment model}

In \cite{eeckhoudt2} p. 65 a probabilistic investment model is presented, in which during a fixed period an agent invests a wealth $w_0$ in a risk--free asset and in a risky asset. The risk--free asset is interpreted as a government bond and the risky asset as a stock. The return of the risky asset is a random variable. At the end of the actual period the agent invests the entire wealth $w_0$. The problem of the agent is to find that division of $w_0$ between bonds and stocks such that to realize a maximal gain.

In this section we will study a possibilistic version of the model of  \cite{eeckhoudt2} based on the hypothesis that the return of the risky asset is a fuzzy number.

Let $r$ be the risk--free return of bond and $x$ the value of stocks return. The agent invests the amount $\alpha$ in stocks and $w_0-\alpha$ in bonds. By \cite{eeckhoudt2} p. 66 the value of portfolio $(w_0-\alpha, \alpha)$ at the end of the period is

$(w_0-\alpha)(1+r)+\alpha (1+x)=w_0(1+r)+\alpha (x-r)=w+\alpha (x-r)$

where $w=w_0(1+r)$ is the future wealth obtained from bonds.

In the possibilistic model that we study $x$ is the value of a fuzzy number $A$. We consider the function

(1) $g(\alpha, w, x)=w+\alpha(x-r)$

If the fuzzy number $A$ is the return of stocks then the value of the portfolio $(w_0-\alpha, \alpha)$ at the end of the period is described by the fuzzy number $g(\alpha, w, A)=w+\alpha(A-r)$.

We fix a weighting function $f$. We assume that the agent has an increasing and concave utility function $u : {\bf {R}} \rightarrow {\bf {R}}$ of class $C^2$ and $[A]^\gamma=[a_1(\gamma), a_2(\gamma)]$, $\gamma \in [0, 1]$.

We consider the function:

(2) $h(\alpha, w, x)=u(g(\alpha, w, x))=u(w+\alpha(x-r))$

Then the mean gain associated with the portfolio $(w_0-\alpha, \alpha)$ will be the possibilistic expected utility associated with $f, A$ and $h$:

$K(\alpha, w)=E(f, h(\alpha, w, A))=\frac{1}{2} \int_0^1 [h(\alpha, w, a_1(\gamma))+h(\alpha, w, a_2(\gamma))]f(\gamma)d\gamma$

By (2)

(3) $K(\alpha, w)=\frac{1}{2} \int_0^1 [u(g(\alpha, w, a_1(\gamma)))+u(g(\alpha, w, a_2(\gamma)))]f(\gamma)d\gamma$

The investor's problem is to choose that $\alpha^\ast$ such that

(4) $K(\alpha^\ast, w)=\displaystyle \max_{\alpha} K(\alpha, w)$

\begin{propozitie}
(i) The function $K(\alpha, w)$ is concave in $\alpha$.

(ii) The real number $\alpha^\ast$ is the optimal solution of (4)
iff $\frac{\partial K(\alpha^\ast, w)}{\partial \alpha}=0$.
\end{propozitie}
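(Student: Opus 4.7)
The plan is to reduce both parts to the fact that $u$ is concave of class $C^2$ together with the standard first-order characterization of maxima of concave functions.

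For part (i), I would fix $w$ and differentiate $K(\alpha, w)$ twice in $\alpha$ under the integral sign (this is justified because $u$ is $C^2$ and the $\gamma$-level endpoints $a_1(\gamma), a_2(\gamma)$ are bounded, so the integrand and its derivatives are bounded on $[0,1]$). Using (3) together with the chain rule, I get
\[
\frac{\partial^2 K(\alpha, w)}{\partial \alpha^2}=\frac{1}{2}\int_0^1\bigl[u''(g(\alpha,w,a_1(\gamma)))(a_1(\gamma)-r)^2+u''(g(\alpha,w,a_2(\gamma)))(a_2(\gamma)-r)^2\bigr]f(\gamma)\,d\gamma.
\]
Since $u$ is concave, $u''\le 0$; the factors $(a_i(\gamma)-r)^2$ are non-negative; and $f(\gamma)\ge 0$ by the definition of a weighting function. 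Hence the integrand is pointwise $\le 0$, so $\partial^2 K/\partial\alpha^2 \le 0$ on all of $\mathbb{R}$, which gives concavity of $K(\cdot,w)$. (Alternatively, one can observe that for each fixed $\gamma$ the map $\alpha\mapsto u(w+\alpha(a_i(\gamma)-r))$ is the composition of the concave function $u$ with an affine function, hence concave, and concavity is preserved by non-negative weighted integration.)

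For part (ii), the claim is the standard first-order condition for concave maximization. Necessity is the usual Fermat condition: if $\alpha^\ast$ is a maximizer of the (everywhere differentiable) function $\alpha\mapsto K(\alpha,w)$ on $\mathbb{R}$, then $\partial K(\alpha^\ast,w)/\partial\alpha=0$. Sufficiency follows from (i): if $K(\cdot,w)$ is concave and differentiable, then for every $\alpha$ one has $K(\alpha,w)\le K(\alpha^\ast,w)+\frac{\partial K(\alpha^\ast,w)}{\partial\alpha}(\alpha-\alpha^\ast)$, so the vanishing of the derivative at $\alpha^\ast$ forces $K(\alpha,w)\le K(\alpha^\ast,w)$ for all $\alpha$, i.e.\ $\alpha^\ast$ solves (4).

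There is no real obstacle here; the only mildly delicate point is the interchange of differentiation and integration in part (i), which is immediate from the $C^2$-assumption on $u$ and the boundedness of the support of $A$ (so $a_1(\gamma), a_2(\gamma)$ lie in a compact interval for $\gamma\in[0,1]$). Everything else is a direct application of elementary convex-analysis facts.
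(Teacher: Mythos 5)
Your proof is correct and follows essentially the same route as the paper: differentiate twice under the integral sign, observe that the resulting integrand is nonpositive because $u''\le 0$, $(a_i(\gamma)-r)^2\ge 0$ and $f\ge 0$, and deduce (ii) from concavity via the standard first-order condition. The extra justifications you supply (the interchange of differentiation and integration, and the gradient inequality for sufficiency) are points the paper leaves implicit but add nothing different in substance.
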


\begin{proof}
(i) By (3) we obtain

$\frac{\partial K(\alpha, w)}{\partial \alpha}=\frac{1}{2} \int_0^1
[u'(g(\alpha, w, a_1(\gamma))) \frac{\partial g(\alpha, w,
a_1(\gamma))}{\partial \alpha}+u'(g(\alpha, w, a_2(\gamma)))
\frac{\partial g(\alpha, w, a_2(\gamma))}{\partial
\alpha}]f(\gamma)d\gamma$

But $\frac{\partial g(\alpha, w, x)}{\partial \alpha}=x-r$, therefore

(5) $\frac{\partial K(\alpha, w)}{\partial \alpha}=\frac{1}{2}
\int_0^1 [u'(g(\alpha, w, a_1(\gamma)))(a_1(\gamma)-r)+u'(g(\alpha,
w, a_2(\gamma)))(a_2(\gamma)-r)]f(\gamma)d\gamma$

From (5) it follows

(6) $\frac{\partial^2 K(\alpha, w)}{\partial \alpha^2}=\frac{1}{2}
\int_0^1 [u''(g(\alpha, w,
a_1(\gamma)))(a_1(\gamma)-r)^2+u''(g(\alpha, w,
a_2(\gamma)))(a_2(\gamma)-r)^2]f(\gamma)d\gamma$

The function $u$ is concave, therefore $u''(g(\alpha, w,
a_1(\gamma))) \leq 0$ and $u''(g(\alpha, w, a_2(\gamma)))\leq 0$,
therefore (6) implies that $\frac{\partial^2 K(\alpha, w)}{\partial
\alpha^2} \leq 0$ for any $\alpha$. Then $K(\alpha, w)$ is concave
in $\alpha$.

(ii) follows from (i).
\end{proof}

The following result is the possibilistic version of a theorem by Mossin \cite{mossin}.

\begin{propozitie}
Assume that $u'>0$ and $u''<0$.

(i) If $E(f, A)=r$  then $\alpha^\ast=0$.

(ii) If $r <E(f, A)$ then $\alpha^\ast >0$.
\end{propozitie}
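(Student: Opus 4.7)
The plan is to use the first-order characterization from the previous proposition together with the concavity of $K(\alpha, w)$ in $\alpha$, and then evaluate $\partial K/\partial\alpha$ at the point $\alpha = 0$. The key observation is that at $\alpha = 0$ we have $g(0, w, x) = w$ for every $x$, so $u'(g(0, w, a_1(\gamma))) = u'(g(0, w, a_2(\gamma))) = u'(w)$, and the factor $u'(w)$ pulls out of the integral in formula (5).

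First, I would compute
$$\frac{\partial K(0,w)}{\partial \alpha} = \frac{u'(w)}{2}\int_0^1 [(a_1(\gamma)-r)+(a_2(\gamma)-r)]f(\gamma)d\gamma = u'(w)\bigl(E(f,A)-r\bigr),$$
using Definition 2.2 for $E(f,A)$ and the normality of $f$. This identifies the sign of the derivative at $0$ with the sign of $E(f,A)-r$.

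For (i), if $E(f,A) = r$ then $\partial K(0,w)/\partial\alpha = 0$, so by part (ii) of the previous proposition, $\alpha^* = 0$ solves (4); uniqueness of the maximizer is not needed here, but concavity guarantees it is indeed a maximum. For (ii), if $r < E(f,A)$ then $\partial K(0,w)/\partial\alpha = u'(w)(E(f,A)-r) > 0$ because $u' > 0$. Hence $K$ is strictly increasing at $\alpha = 0$, so the maximizer (which exists as the vanishing point of the continuous, nonincreasing derivative of the concave function $K$) must satisfy $\alpha^* > 0$; equivalently, if we had $\alpha^* \le 0$, then by concavity $\partial K(\alpha^*,w)/\partial\alpha \ge \partial K(0,w)/\partial\alpha > 0$, contradicting the first-order condition.

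No real obstacle arises: the argument reduces to the elementary fact that the sign of $\partial K/\partial\alpha$ at the riskless portfolio equals the sign of $E(f,A)-r$, which is exactly the possibilistic analogue of Mossin's original probabilistic computation. The only point worth being careful about is to invoke concavity (Proposition 4.1(i)) to convert a strictly positive derivative at $0$ into the conclusion that the optimum lies to the right of $0$.
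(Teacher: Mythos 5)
Your proposal is correct and follows essentially the same route as the paper: evaluate formula (5) at $\alpha=0$ to get $\partial K(0,w)/\partial\alpha = u'(w)(E(f,A)-r)$, apply Proposition 4.1(ii) for part (i), and for part (ii) derive a contradiction from $\alpha^\ast\le 0$ using the fact that concavity makes $\partial K/\partial\alpha$ nonincreasing. No gaps; nothing further is needed.
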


\begin{proof}
(i) We notice that $g(0, w, x)=w$, therefore from (5) it follows

$\frac{\partial K(0, w)}{\partial \alpha}=\frac{1}{2} \int_0^1 [u'(w)(a_1(\gamma)-r)+u'(w)(a_2(\gamma)-r)]f(\gamma)d\gamma$

$=\frac{u'(w)}{2} \int_0^1 [a_1(\gamma)+a_2(\gamma)-2r]f(\gamma)d\gamma=u'(w)(E(f, A)-r)$

If $E(f, A)=r$ then $\frac{\partial K(0, w)}{\partial \alpha}=0$ then by Proposition 4.1 (ii), $\alpha^\ast=0$ is the optimal solution of (4).

(ii) Assume by absurdum that $\alpha^\ast \leq 0$. From $r < E(f, A)$ and $u'(w)>0$ it follows $\frac{\partial K(0, w)}{\partial \alpha}=u'(w)(E(f, A)-r)>0$. Since $u''<0$, $\frac{\partial K(\alpha, w)}{\partial \alpha}$ is decreasing in $\alpha$. Accordingly, $\alpha^\ast \leq 0$ implies $0=\frac{\partial K(\alpha^\ast, w)}{\partial \alpha} \geq \frac{\partial K(0, w)}{\partial \alpha} >0$. The contradiction shows that $\alpha^\ast >0$.
\end{proof}

The following result establishes an approximate calculation formula for the optimal solution $\alpha^\ast$ of problem (4).

\begin{propozitie}
$\alpha^\ast \approx - \frac{u'(w)}{u''(w)} \frac{E(f, A)-r}{Var(f, A)+(E(f, A)-r)^2}$
\end{propozitie}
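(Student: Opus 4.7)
The plan is to start from the first-order condition established in Proposition 4.1, namely
\[
0=\frac{\partial K(\alpha^\ast, w)}{\partial \alpha}=\frac{1}{2}\int_0^1\bigl[u'(g(\alpha^\ast,w,a_1(\gamma)))(a_1(\gamma)-r)+u'(g(\alpha^\ast,w,a_2(\gamma)))(a_2(\gamma)-r)\bigr]f(\gamma)\,d\gamma,
\]
and then replace $u'$ by its first-order Taylor expansion around $w$. Since $g(\alpha^\ast,w,x)-w=\alpha^\ast(x-r)$, we have $u'(g(\alpha^\ast,w,x))\approx u'(w)+u''(w)\,\alpha^\ast(x-r)$, which I would substitute into the FOC with $x=a_1(\gamma)$ and $x=a_2(\gamma)$.

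After substitution, the approximate FOC splits into a linear part in $u'(w)$ and a linear part in $u''(w)\alpha^\ast$. The $u'(w)$ term gives $u'(w)\cdot\tfrac{1}{2}\int_0^1[(a_1(\gamma)-r)+(a_2(\gamma)-r)]f(\gamma)\,d\gamma=u'(w)(E(f,A)-r)$, using Definition 2.2. The $u''(w)\alpha^\ast$ term gives $u''(w)\alpha^\ast\cdot\tfrac{1}{2}\int_0^1[(a_1(\gamma)-r)^2+(a_2(\gamma)-r)^2]f(\gamma)\,d\gamma$. The key algebraic step is to rewrite this last integral: expand $(a_i(\gamma)-r)^2=(a_i(\gamma)-E(f,A))^2+2(a_i(\gamma)-E(f,A))(E(f,A)-r)+(E(f,A)-r)^2$ for $i=1,2$, and observe that the linear cross term vanishes because $\tfrac12\int_0^1[(a_1(\gamma)-E(f,A))+(a_2(\gamma)-E(f,A))]f(\gamma)\,d\gamma=E(f,A)-E(f,A)=0$. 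What remains is exactly $\mathrm{Var}(f,A)+(E(f,A)-r)^2$ (using the normality $\int_0^1 f(\gamma)\,d\gamma=1$).

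Putting the two pieces together, the approximate FOC reads
\[
0\approx u'(w)\,(E(f,A)-r)+u''(w)\,\alpha^\ast\bigl[\mathrm{Var}(f,A)+(E(f,A)-r)^2\bigr],
\]
and solving for $\alpha^\ast$ yields the claimed formula. The only delicate point is the algebraic identity that turns $\tfrac12\int[(a_1-r)^2+(a_2-r)^2]f\,d\gamma$ into $\mathrm{Var}(f,A)+(E(f,A)-r)^2$; everything else is Taylor expansion and linearity of the integral. The use of $\approx$ is justified by discarding the $O((\alpha^\ast)^2)$ remainder in the Taylor expansion of $u'$, which is the standard convention in this kind of approximate-solution result.
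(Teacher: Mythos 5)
Your proposal is correct and follows essentially the same route as the paper: first-order Taylor expansion of $u'$ around $w$, substitution into the first-order condition (5), and identification of the two integrals $E(f,A)-r$ and $\mathrm{Var}(f,A)+(E(f,A)-r)^2$. The only (cosmetic) difference is in evaluating the second integral: you expand $(a_i(\gamma)-r)^2$ around $E(f,A)$ and invoke the definition of the variance directly, whereas the paper expands into $a_i^2(\gamma)-2ra_i(\gamma)+r^2$ and applies Corollary 2.8; the two computations are trivially equivalent.
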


\begin{proof}
We write the first--order Taylor approximation to $u'(w+\alpha(x-r))$ around $w$:

$u'(g(\alpha, w, x))=u'(w+\alpha(x-r))\approx u'(w)+\alpha (x-r) u''(w)$

Then, taking into account (5)

$\frac{\partial K(\alpha, w)}{\partial \alpha} \approx \frac{1}{2} \int_0^1 [(u'(w)+\alpha (a_1(\gamma)-r)u''(w))(a_1(\gamma)-r)]f(\gamma)d\gamma+$

$+ \frac{1}{2} \int_0^1 [(u'(w)+\alpha (a_2(\gamma)-r)u''(w))(a_2(\gamma)-r)]f(\gamma)d\gamma$

By denoting

$I_1=\frac{1}{2} \int_0^1 [a_1(\gamma)+a_2(\gamma)-2r]f(\gamma)d\gamma$

$I_2=\frac{1}{2} \int_0^1 [(a_1(\gamma)-r)^2 +(a_2(\gamma)-r)^2]f(\gamma)d\gamma$

it follows

$\frac{\partial K(\alpha, w)}{\partial \alpha} \approx  u'(w)I_1+ \alpha u''(w)I_2$.

We compute $I_1$ and $I_2$:

$I_1=\int_0^1 \frac{a_1(\gamma)+a_2(\gamma)}{2} f(\gamma)d\gamma -r=E(f, A)-r$

$I_2=\frac{1}{2} \int_0^1 [a_1^2(\gamma)+a_2^2(\gamma)-2r(a_1(\gamma)+a_2(\gamma))+2r^2]f(\gamma)d\gamma$

$=\frac{1}{2} \int_0^1 [a_1^2(\gamma)+a_2^2(\gamma)]f(\gamma)d\gamma-2r \int_0^1 \frac{a_1(\gamma)+a_2(\gamma)}{2} f(\gamma)d\gamma +r^2$

By Corollary 2.8, $\frac{1}{2} \int_0^1 [a_1^2(\gamma)+a_2^2(\gamma)]f(\gamma)d\gamma=Var(f, A)+E^2(f, A)$, thus

$I_2=Var(f, A)+E^2(f, A)-2rE(f, A)+r^2=Var(f, A)+(E(f, A)-r)^2$

Then

$\frac{\partial K(\alpha, w)}{\partial \alpha} \approx u'(w)(E(f, A)-r)+\alpha u''(w) [Var(f, A)+(E(f, A)-r)^2]$

Taking into account this relation, the approximate solution of the equation $\frac{\partial K(\alpha^\ast, w)}{\partial \alpha} =0$ has the form

$\alpha^\ast \approx - \frac{u'(w)}{u''(w)} \frac{E(f, A)-r}{Var(f, A)+(E(f, A)-r)^2}$.
\end{proof}

We recall from \cite{arrow0}, \cite{arrow}, \cite{pratt} the Arrow--Pratt index of the utility function $u$:

(7) $r_u(w)=-\frac{u''(w)}{u'(w)}$ for any $w \in {\bf{R}}$.

Then from Proposition 4.3 it follows

(8) $\alpha^\ast \approx \frac{1}{r_u(w)} \frac{E(f, A)-r}{Var(f, A)+(E(f, A)-r)^2}$

Consider two agents with the unidimensional utility functions $u_1$, $u_2$ such that $u_1'>0$, $u_2'>0$, $u_1''<0$, $u_2''<0$. We denote by $r_1(w)=r_{u_1}(w)$ and $r_2(w)=r_{u_2}(w)$ the Arrow-Pratt indexes of $u_1$ and $u_2$. We recall from \cite{eeckhoudt2}, p. 14 that $u_1$ is {\emph {more risk--averse}} than $u_2$ iff $r_1(w) \geq r_2(w)$ for any $w \in {\bf{R}}$.

\begin{corolar}
Let $\alpha_1^\ast, \alpha_2^\ast$ be the optimal solutions of problem (4) for the utility functions $u_1$ and $u_2$. If the agent $u_1$ is more risk--averse than $u_2$ then $\alpha_1^\ast \leq \alpha_2^\ast$.
\end{corolar}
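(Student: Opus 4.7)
The plan is to reduce the corollary to the approximate formula (8), since that formula writes $\alpha^\ast$ as a product of two factors: the reciprocal of the Arrow--Pratt index $r_u(w)$ (which is the only piece depending on the utility function) and a quantity depending only on the fuzzy return $A$ and the risk--free rate $r$. Once this structural observation is made, the comparison between $\alpha_1^\ast$ and $\alpha_2^\ast$ is essentially a one--line inequality on reciprocals, and the role of risk aversion becomes transparent.

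Concretely, I would first apply Proposition 4.3 (in the form of (8)) to the two utility functions $u_1$ and $u_2$, obtaining
\[
\alpha_i^\ast \approx \frac{1}{r_i(w)}\,\frac{E(f,A)-r}{Var(f,A)+(E(f,A)-r)^2}, \qquad i=1,2.
\]
The denominator $Var(f,A)+(E(f,A)-r)^2$ is manifestly non--negative, so the sign of the common factor is determined by $E(f,A)-r$. In the investment setting treated in Proposition 4.2 (ii), the relevant case is $E(f,A)\geq r$, which also follows implicitly from the problem being meaningful (otherwise $\alpha^\ast\leq 0$ by the Mossin--type argument). Under this sign assumption, the common factor is $\geq 0$, and from $r_1(w)\geq r_2(w)>0$ we deduce $1/r_1(w)\leq 1/r_2(w)$. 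Multiplying the inequality between the reciprocals by the non--negative common factor yields the desired $\alpha_1^\ast \leq \alpha_2^\ast$.

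The only genuine subtlety is the sign issue: if one tried to state the corollary for arbitrary $r$ and $A$, then in the regime $E(f,A)<r$ the common factor would become negative and the inequality between the reciprocals would flip when multiplied, producing the opposite conclusion. So the main ``obstacle'' is really a book--keeping matter: one must either restrict attention to $E(f,A)\geq r$ (which is the interesting case of Proposition 4.2 (ii), and is the case implicitly targeted by the corollary) or split into cases. Apart from this, there is no technical difficulty, because the proof is a direct algebraic consequence of (8) together with the definition $r_u(w)=-u''(w)/u'(w)$ and the hypothesis $r_1(w)\geq r_2(w)$.
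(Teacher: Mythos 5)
Your proof is correct and follows exactly the paper's route: the paper's entire proof of this corollary is the single sentence ``The approximate formula (8) for $\alpha_1^\ast$ and $\alpha_2^\ast$ is applied.'' Your additional observation about the sign of the common factor $\frac{E(f,A)-r}{Var(f,A)+(E(f,A)-r)^2}$ is a genuine point the paper silently skips --- the stated inequality really does require $E(f,A)\geq r$ (the regime of Proposition 4.2(ii)), and your handling of it is the right fix.
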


\begin{proof}
The approximate formula (8) for $\alpha_1^\ast$ and $\alpha_2^\ast$ is applied.
\end{proof}

\begin{corolar}
If the Arrow--Pratt index $r_u(w)$ is decreasing then $\alpha^\ast(w)$ is increasing in wealth.
\end{corolar}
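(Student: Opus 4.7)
The plan is to read off the dependence of $\alpha^\ast$ on $w$ directly from the approximate formula (8) obtained in Proposition 4.3. Rewriting (8) as
$$\alpha^\ast(w) \;\approx\; \frac{1}{r_u(w)}\cdot C(A,r), \qquad C(A,r)=\frac{E(f,A)-r}{Var(f,A)+(E(f,A)-r)^2},$$
we see that the factor $C(A,r)$ depends only on $r$ and on the possibilistic indicators of the fuzzy number $A$, and in particular is independent of $w$. Hence the whole $w$-dependence of $\alpha^\ast$ is concentrated in the factor $1/r_u(w)$.

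First I would invoke Proposition 4.3 to represent $\alpha^\ast(w)$ in the form above. Next, under the economically meaningful case $r<E(f,A)$ (which, by Proposition 4.2 (ii), is exactly when $\alpha^\ast>0$; otherwise $\alpha^\ast=0$ and the monotonicity statement is trivial), the constant $C(A,r)$ is strictly positive. Since $r_u$ is assumed decreasing in $w$ and $r_u>0$ (because $u'>0$ and $u''<0$), the reciprocal $w\mapsto 1/r_u(w)$ is increasing in $w$. Multiplying an increasing positive function by the positive constant $C(A,r)$ preserves monotonicity, so $\alpha^\ast(w)$ is increasing in $w$.

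Finally I would note that, strictly speaking, the conclusion is about the approximate optimal solution given by the Taylor expansion used in Proposition 4.3; this is the same level of rigour adopted in Corollary 4.4, so no further justification is needed. There is no real obstacle here: the only point that deserves a line of comment is the sign of $C(A,r)$, which is handled by separating the cases $E(f,A)=r$, $E(f,A)>r$ via Mossin's possibilistic analogue (Proposition 4.2). The argument is thus a two-line application of formula (8), presented in the same spirit as the proof of Corollary 4.4.
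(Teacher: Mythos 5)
Your argument is correct and is exactly the intended one: the paper gives no explicit proof of this corollary, but, as with Corollary 4.4, the statement is meant to be read off from the approximate formula (8), which is precisely what you do. Your explicit check that the $w$--independent factor $\frac{E(f,A)-r}{Var(f,A)+(E(f,A)-r)^2}$ is nonnegative (via Proposition 4.2, under the standing assumption $r\leq E(f,A)$) is a point the paper leaves tacit, and it is genuinely needed, since for $E(f,A)<r$ the product of that negative constant with the increasing function $1/r_u(w)$ would reverse the monotonicity.
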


\section{Possibilistic background risk}

In the following we will study a two-asset model corresponding to a more complex situation than the one from the previous section. Among the initial data of the model, besides the wealth $w_0$, the existence of a possibilistic--type ''background risk'' is admitted. Our model will be a possibilistic version of the one of \cite{eeckhoudt2}, p. 68, in which the background risk is described by a random variable. In the interpretation of \cite{eeckhoudt2}, the background risk could be associated with the labor income.

As in Section 4, the agent invests the wealth $w_0$ in bonds and stocks. The return of bonds is the real number $r$ and the return of stocks is the fuzzy number $A$. A background risk represented by a fuzzy number $B$ is added. The agent invests the sum $\alpha$ in stocks and $w_0-\alpha$ in bonds.

We fix a weighting function $f$. Let $u: {\bf{R}} \rightarrow {\bf{R}}$ be the agent's utility function (of class $C^2$, increasing and concave). Assume that the level sets of the fuzzy numbers $A$ and $B$ have the form $[A]^\gamma=[a_1(\gamma), a_2(\gamma)]$, $[B]^\gamma=[b_1(\gamma), b_2(\gamma)]$ for any $\gamma \in [0, 1]$.

We consider the following functions:

(1) $g_1(\alpha, w, x, y)=w+y+\alpha(x-r)$

(2) $h_1(\alpha, w, x, y)=u(g_1(\alpha, w, x, y))=u(w+y+\alpha(x-r))$

The function $h_1(\alpha, w, ., .)$ will be considered a bidimensional utility function (with variables $x, y$), and $\alpha, w$ are parameters.

Then the fuzzy number $g_1(\alpha, w, A, B)$ will be the value of the portfolio $(w_0-\alpha, \alpha)$ at the end of the considered period. We also consider the possibilistic expected utility corresponding to the portfolio $(w_0-\alpha, \alpha)$:

(3) $K_1(\alpha, w)=E(f, h_1(\alpha, w, A, B))$

Then the investor's problem is to determine that $\alpha^{\ast \ast}$ for which

(4) $K_1(\alpha^{\ast \ast}, w)= \displaystyle \max_{\alpha} K_1(\alpha, w)$.

By (3), (2) and Definition 3.1 we have

$K_1(\alpha, w)=\frac{1}{2} \int_0^1 [h_1(\alpha, w, a_1(\gamma), b_1(\gamma))+h_1(\alpha, w, a_2(\gamma), b_2(\gamma))]f(\gamma)d\gamma$

$=\frac{1}{2} \int_0^1 [u(g_1(\alpha, w, a_1(\gamma), b_1(\gamma)))+u(g_1(\alpha, w, a_2(\gamma), b_2(\gamma)))]f(\gamma)d\gamma$

From this form of $K_1(\alpha, w)$ one obtains immediately

(5) $\frac{\partial K_1(\alpha, w)}{\partial \alpha}=\frac{1}{2} \int_0^1 u'(g_1(\alpha, w, a_1(\gamma), b_1(\gamma)))(a_1(\gamma)-r)f(\gamma)d\gamma+ $

$+\frac{1}{2} \int_0^1 u'(g_1(\alpha, w, a_2(\gamma), b_2(\gamma)))(a_2(\gamma)-r)f(\gamma)d\gamma$

(6) $\frac{\partial^2 K_1(\alpha, w)}{\partial \alpha^2}=\frac{1}{2} \int_0^1 u''(g_1(\alpha, w, a_1(\gamma), b_1(\gamma)))(a_1(\gamma)-r)^2 f(\gamma)d\gamma+ $

$+\frac{1}{2} \int_0^1 u''(g_1(\alpha, w, a_2(\gamma), b_2(\gamma)))(a_2(\gamma)-r)^2 f(\gamma)d\gamma$

By (5) and (6) we can prove:

\begin{propozitie}
(i) The function $K_1(\alpha, w)$ is concave in $\alpha$.

(ii) The real number $\alpha^{\ast \ast}$ is the solution of (4) iff $\frac{\partial K_1(\alpha^{\ast \ast}, w)}{\partial \alpha}=0$.
\end{propozitie}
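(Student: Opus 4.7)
The plan is to mimic the proof of Proposition 4.1 verbatim, since formulas (5) and (6) in Section 5 are direct two-variable analogues of formulas (5) and (6) in Section 4. Parts (i) and (ii) can be dispatched in the same order as before: establish concavity in $\alpha$ from the sign of $\partial^2 K_1/\partial \alpha^2$, then deduce the first-order characterization of the optimum.

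For part (i), I would start from equation (6), which is already computed in the excerpt and hands me
$\partial^2 K_1(\alpha, w)/\partial \alpha^2$
as an average of two integrals whose integrands are $u''(g_1(\alpha, w, a_i(\gamma), b_i(\gamma)))(a_i(\gamma)-r)^2 f(\gamma)$ for $i=1,2$. The concavity hypothesis on $u$ gives $u'' \leq 0$; the factors $(a_i(\gamma)-r)^2$ and $f(\gamma)$ are non-negative (recall $f$ is non-negative as a weighting function). Hence each integrand is $\leq 0$, so the whole second derivative is $\leq 0$ for every $\alpha$. This yields concavity of $K_1(\cdot, w)$.

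Part (ii) is then immediate from part (i): a differentiable concave function on $\mathbf{R}$ attains its global maximum at $\alpha^{\ast\ast}$ if and only if its derivative vanishes there. I would simply invoke this standard fact applied to the univariate function $\alpha \mapsto K_1(\alpha, w)$ using $\partial K_1/\partial \alpha$ from (5).

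There is no real obstacle here; the only thing to be slightly careful about is justifying differentiation under the integral sign in passing from $K_1$ to (5) and (6), but this is standard given that $u \in C^2$, the level-set endpoints $a_i(\gamma), b_i(\gamma)$ are bounded (fuzzy numbers have bounded support), and $f$ is integrable on $[0,1]$, so dominated convergence applies uniformly in $\alpha$ on any bounded interval. Once (5) and (6) are in hand, the argument is a one-line sign check plus a one-line appeal to concavity.
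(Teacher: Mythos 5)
Your proposal is correct and follows exactly the route the paper intends: the paper states Proposition 5.1 with only the remark ``By (5) and (6) we can prove,'' deferring to the argument of Proposition 4.1, which is precisely the sign check on $\frac{\partial^2 K_1}{\partial \alpha^2}$ via $u''\leq 0$, $(a_i(\gamma)-r)^2\geq 0$, $f\geq 0$, followed by the standard first-order characterization of the maximum of a concave function. Your added remark on differentiation under the integral sign is a small bonus of rigor the paper does not bother with.
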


\begin{propozitie}
$\alpha^{\ast \ast} \approx \alpha^\ast - \frac{Cov(f, A, B)+E(f, B)(E(f, A)-r)}{Var(f, A)+(E(f, A)-r)^2}$
\end{propozitie}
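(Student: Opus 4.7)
The plan is to follow the same strategy as in the proof of Proposition 4.3: linearize $u'$ around $w$ by a first-order Taylor expansion, substitute into the first-order condition $\partial K_1/\partial \alpha = 0$, identify the resulting integrals with the possibilistic indicators $E(f,A)$, $E(f,B)$, $Var(f,A)$, $Cov(f,A,B)$, and solve for $\alpha^{\ast\ast}$. Compared to the background-risk-free case, the new feature is that the argument of $u'$ now contains an extra term $y$, so the Taylor expansion will produce additional mixed terms that ultimately account for the covariance and the $E(f,B)(E(f,A)-r)$ contributions.

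Concretely, I would start from formula (5) and write
\[
u'(g_1(\alpha,w,x,y)) = u'(w + y + \alpha(x-r)) \approx u'(w) + u''(w)\bigl(y + \alpha(x-r)\bigr).
\]
Substituting this into (5) for $(x,y) = (a_i(\gamma), b_i(\gamma))$, $i=1,2$, and expanding, three types of integrals appear. The first one, carrying the factor $u'(w)$, is the same $I_1$ as in Proposition 4.3 and equals $E(f,A)-r$. The third one, carrying the factor $\alpha u''(w)$, is the same $I_2$ as in Proposition 4.3 and equals $Var(f,A) + (E(f,A)-r)^2$. The genuinely new integral, carrying the factor $u''(w)$ alone, is
\[
J = \tfrac{1}{2}\int_0^1 \bigl[b_1(\gamma)(a_1(\gamma)-r) + b_2(\gamma)(a_2(\gamma)-r)\bigr] f(\gamma)\,d\gamma.
\]
Splitting $J$ and applying Proposition 2.7 to the product terms and Definition 2.2 to the terms linear in $b_i(\gamma)$, one gets $J = Cov(f,A,B) + E(f,A)E(f,B) - r E(f,B) = Cov(f,A,B) + E(f,B)(E(f,A)-r)$.

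Putting everything together,
\[
\frac{\partial K_1(\alpha,w)}{\partial \alpha} \approx u'(w)(E(f,A)-r) + u''(w)\bigl[Cov(f,A,B) + E(f,B)(E(f,A)-r)\bigr] + \alpha\, u''(w)\bigl[Var(f,A)+(E(f,A)-r)^2\bigr].
\]
By Proposition 5.1(ii) and $u''<0$, the equation $\partial K_1(\alpha^{\ast\ast},w)/\partial \alpha = 0$ can be solved for $\alpha^{\ast\ast}$ by dividing through by $u''(w)[Var(f,A)+(E(f,A)-r)^2]$. The term $-\frac{u'(w)}{u''(w)}\frac{E(f,A)-r}{Var(f,A)+(E(f,A)-r)^2}$ is exactly the approximate value of $\alpha^\ast$ from Proposition 4.3, and what remains is precisely the correction $-\frac{Cov(f,A,B)+E(f,B)(E(f,A)-r)}{Var(f,A)+(E(f,A)-r)^2}$ stated in the claim.

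The main obstacle is purely bookkeeping: making sure the mixed integral $J$ is correctly decomposed into a covariance-type piece and an expectation-type piece via Proposition 2.7, so that the Cov term and the $E(f,B)(E(f,A)-r)$ term emerge with the right signs. There is no conceptual difficulty beyond that — the concavity of $K_1$ in $\alpha$ from Proposition 5.1 guarantees that the first-order condition characterizes $\alpha^{\ast\ast}$, and the Taylor truncation is the same approximation already used and tolerated in Proposition 4.3.
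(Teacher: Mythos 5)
Your proposal is correct and follows essentially the same route as the paper's own proof: the identical first-order Taylor expansion of $u'$ around $w$, the same decomposition of $\partial K_1/\partial \alpha$ into the three integrals (your $J$ is the paper's $I_2$), the same use of Proposition 2.7 to identify $J$ with $Cov(f,A,B)+E(f,B)(E(f,A)-r)$, and the same identification of the leading term with $\alpha^\ast$ via Proposition 4.3. No gaps.
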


\begin{proof}
We consider the first--order approximation to $u'(w+y+\alpha(x-r))$ around $w$:

$u'(g_1(\alpha, w, x, y))=u'(w+y+\alpha(x-r))\approx u'(w)+u''(w)[y+\alpha(x-r)]f(\gamma)d \gamma$

Then, by (1) we can write

$\frac{\partial K_1(\alpha, w)}{\partial \alpha} \approx \frac{1}{2} \int_0^1 [u'(w)+(b_1(\gamma)+\alpha(a_1(\gamma)-r)u''(w)](a_1(\gamma)-r)f(\gamma)d\gamma$

$+\frac{1}{2} \int_0^1 [u'(w)+(b_2(\gamma)+\alpha(a_2(\gamma)-r)u''(w)](a_2(\gamma)-r)f(\gamma)d\gamma=$

$=u'(w)I_1+u''(w)I_2+\alpha u''(w)I_3$

where

$I_1=\frac{1}{2} \int_0^1 [a_1(\gamma)+a_2(\gamma)-2r]f(\gamma)d\gamma$

$I_2=\frac{1}{2} \int_0^1 [b_1(\gamma)(a_1(\gamma)-r)+b_2(\gamma)(a_2(\gamma)-r)]f(\gamma)d\gamma$

$I_3=\frac{1}{2} \int_0^1 [(a_1(\gamma)-r)^2 +(a_2(\gamma)-r)^2]f(\gamma)d\gamma$

By the proof of Proposition 4.3, $I_1=E(f, A)-r$ and $I_3=Var(f, A)+(E(f, A)-r)^2$. We compute $I_2$:

$I_2=\frac{1}{2} \int_0^1 [a_1(\gamma)b_1(\gamma)+a_2(\gamma)b_2(\gamma)]f(\gamma)d\gamma-r \int_0^1 \frac{b_1(\gamma)+b_2(\gamma)}{2} f(\gamma)d\gamma$.

By Proposition 2.7

$\frac{1}{2} \int_0^1 [a_1(\gamma)b_1(\gamma)+a_2(\gamma)b_2(\gamma)]f(\gamma)d\gamma=Cov(f, A, B)+E(f, A)E(f, B)$

thus

$I_2=Cov(f, A, B)+E(f, A)E(f, B)-rE(f, B)=Cov(f, A, B)+E(f, B)(E(f, A)-r)$

By replacing $I_1$, $I_2$ and $I_3$ in the approximate expression of  $\frac{\partial K_1(\alpha, w)}{\partial \alpha}$ we obtain

$\frac{\partial K_1(\alpha, w)}{\partial \alpha}\approx u'(w)(E(f, A)-r)+u''(w)[Cov(f, A, B)+E(f, B)(E(f, A)-r)]+\alpha u''(w)[Var(f, A)+(E(f, A)-r)^2]$

By this last relation, an approximate solution of the equation $\frac{\partial K_1(\alpha^{\ast \ast} , w)}{\partial \alpha}=0$ is

$\alpha^{\ast \ast}\approx -\frac{u'(w)}{u''(w)} \frac{E(f, A)-r}{Var(f, A)+(E(f, A)-r)^2}-\frac{Cov(f, A, B)+E(f, B)(E(f, A)-r)}{Var(f, A)+(E(f, A)-r)^2}$

By Proposition 4.3

$\alpha^{\ast \ast}\approx \alpha^{\ast} -\frac{Cov(f, A, B)+E(f, B)(E(f, A)-r)}{Var(f, A)+(E(f, A)-r)^2}$.
\end{proof}

\begin{corolar}
Let $u_1, u_2$ be unidimensional utility functions of two agents. Assume $u'_1>0$, $u'_2>0$, $u''_1<0$, $u''_2<0$. Let $\alpha_1^{\ast \ast}, \alpha_2^{\ast \ast}$ be the solutions of problem (4) for $u_1, u_2$. If the agent $u_1$ is more risk averse than $u_2$ then $\alpha_1^{\ast \ast} \leq \alpha_2^{\ast \ast}$.
\end{corolar}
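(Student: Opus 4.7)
The plan is to reduce this corollary to Corollary 4.4 via the approximate formula established in Proposition 5.2. First I would write down the conclusion of Proposition 5.2 for each agent separately: for $i=1,2$,
\[
\alpha_i^{\ast\ast} \approx \alpha_i^{\ast} - \frac{Cov(f,A,B)+E(f,B)(E(f,A)-r)}{Var(f,A)+(E(f,A)-r)^2},
\]
where $\alpha_i^{\ast}$ denotes the optimal solution of the background-risk-free problem (4) of Section~4 for the utility $u_i$.

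The key observation, which I would make explicit, is that the correction term
\[
\frac{Cov(f,A,B)+E(f,B)(E(f,A)-r)}{Var(f,A)+(E(f,A)-r)^2}
\]
depends only on the possibilistic data $A$, $B$, $f$, and the scalar $r$, but is completely independent of the utility function $u_i$. Consequently, taking the difference of the two approximations yields
\[
\alpha_1^{\ast\ast}-\alpha_2^{\ast\ast}\approx \alpha_1^{\ast}-\alpha_2^{\ast},
\]
so the comparison between $\alpha_1^{\ast\ast}$ and $\alpha_2^{\ast\ast}$ is inherited from the comparison between $\alpha_1^{\ast}$ and $\alpha_2^{\ast}$.

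To finish, I would invoke Corollary 4.4: because $u_1$ is more risk-averse than $u_2$ (that is, $r_1(w)\ge r_2(w)$ for all $w$), we have $\alpha_1^{\ast}\le \alpha_2^{\ast}$. Combining this with the displayed approximate equality yields $\alpha_1^{\ast\ast}\le \alpha_2^{\ast\ast}$, as claimed. There is no real obstacle here since, as in Corollary 4.4, the argument is carried out at the level of the approximate formula rather than of the exact optimizers; the only point worth emphasising in the write-up is the utility-independence of the correction term, which is what makes the background-risk case reduce cleanly to the background-risk-free case.
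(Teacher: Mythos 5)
Your argument is exactly the paper's: the paper proves this corollary by applying Proposition 5.2 together with Corollary 4.4, and your write-up simply makes explicit the (correct) observation that the correction term is independent of the utility function. Correct, same approach.
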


\begin{proof}
Proposition 5.2 and Corollary 4.4 are applied.
\end{proof}

\begin{corolar}
If the Arrow--Pratt index of $u$ is decreasing then $\alpha^{\ast \ast}(w)$ is increasing in wealth.
\end{corolar}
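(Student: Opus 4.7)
My plan is to reduce Corollary 5.4 to Corollary 4.5 by inspecting the approximate formula supplied by Proposition 5.2. Writing the optimal solution in its $w$-dependent form,
\[
\alpha^{\ast\ast}(w) \;\approx\; \alpha^{\ast}(w) \;-\; \frac{Cov(f,A,B)+E(f,B)(E(f,A)-r)}{Var(f,A)+(E(f,A)-r)^{2}},
\]
the first step is to observe that the correction term on the right is built entirely out of the possibilistic indicators of the fuzzy numbers $A$ and $B$, the risk-free rate $r$, and the weighting function $f$; none of these depend on the wealth parameter $w$. Consequently the $w$-dependence of $\alpha^{\ast\ast}$ is inherited verbatim from that of $\alpha^{\ast}$, up to an additive constant in $w$.

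The second step is simply to invoke Corollary 4.5: under the hypothesis that the Arrow--Pratt index $r_u(w)=-u''(w)/u'(w)$ is decreasing in $w$, the quantity $\alpha^{\ast}(w)$ is increasing in $w$ (this follows from formula (8) applied to $\alpha^{\ast}$, since the factor $\tfrac{E(f,A)-r}{Var(f,A)+(E(f,A)-r)^{2}}$ does not involve $w$ and $1/r_u(w)$ is increasing). Adding a $w$-free constant preserves monotonicity, so $\alpha^{\ast\ast}(w)$ is increasing in $w$ as well.

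The only subtle point worth flagging is that, as throughout Section 5, the conclusion is understood in the sense of the first-order Taylor approximation underlying Proposition 5.2; no further estimation is needed beyond noting that the correction term is constant in $w$. Thus the proof reduces to a one-line appeal to Proposition 5.2 together with Corollary 4.5, exactly in parallel with how Corollary 5.3 is deduced from Proposition 5.2 and Corollary 4.4.
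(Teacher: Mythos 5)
Your proposal is correct and matches the paper's intended argument: the correction term in Proposition 5.2 is independent of $w$, so the monotonicity of $\alpha^{\ast\ast}(w)$ reduces to that of $\alpha^{\ast}(w)$ via Corollary 4.5, exactly as the paper does for the analogous Corollary 6.4. (The implicit sign assumption $E(f,A)\geq r$ needed for Corollary 4.5 is inherited from the paper itself, so it is not a gap in your argument.)
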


According to intuition, adding a background risk to the risk generated by stocks should make the agent invest less in them, thus $\alpha^{\ast \ast}
\leq \alpha^{\ast}$. The following result gives a necessary and sufficient condition such that $\alpha^{\ast \ast}
\leq \alpha^{\ast}$.

\begin{propozitie}
$\alpha^{\ast \ast}\leq \alpha^{\ast}$ iff $r \leq \frac{Cov(f, A, B)+E(f, A)E(f, B)}{E(f, B)}$
\end{propozitie}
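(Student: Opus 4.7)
The plan is to start directly from the approximate formula established in Proposition 5.2, which expresses
\[
\alpha^{\ast \ast} - \alpha^{\ast} \approx - \frac{Cov(f, A, B) + E(f, B)(E(f, A) - r)}{Var(f, A) + (E(f, A) - r)^2}.
\]
So the inequality $\alpha^{\ast \ast} \leq \alpha^{\ast}$ is equivalent to the right-hand fraction being non-negative.

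Next, I would observe that the denominator $Var(f, A) + (E(f, A) - r)^2$ is a sum of two non-negative terms and is strictly positive in any non-degenerate case (indeed $Var(f,A)\geq 0$ by Corollary 2.8 and the square is non-negative). Hence the sign of the expression is controlled entirely by the numerator, and the chain of equivalences reduces to
\[
Cov(f, A, B) + E(f, B)(E(f, A) - r) \geq 0.
\]

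From here the proof is a one-line algebraic rearrangement: expanding $E(f,B)(E(f,A)-r) = E(f,A)E(f,B) - rE(f,B)$ gives
\[
r\, E(f, B) \leq Cov(f, A, B) + E(f, A)E(f, B),
\]
and dividing through by $E(f, B)$ (which, under the standing hypothesis that $B$ represents a background wealth component, is assumed positive) yields exactly the stated criterion
\[
r \leq \frac{Cov(f, A, B) + E(f, A)E(f, B)}{E(f, B)}.
\]

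The only mildly subtle point — and the one I would flag as the main issue rather than a real obstacle — is the implicit assumption $E(f, B) > 0$ needed to divide without reversing the inequality; this is natural in the interpretation of $B$ as labor income, and I would state it explicitly at the start. Everything else is purely mechanical once Proposition 5.2 is invoked.
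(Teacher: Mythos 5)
Your proof is correct and follows essentially the same route as the paper: invoke Proposition 5.2, observe that the denominator $Var(f,A)+(E(f,A)-r)^2$ is non-negative so the sign is governed by the numerator, and rearrange. Your explicit flagging of the hypothesis $E(f,B)>0$ needed for the final division is a point the paper silently assumes, and is worth stating.
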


\begin{proof}
By noticing that $Var(f, A)+(E(f, A)-r)^2 \geq 0$ , by Proposition 5.2 the next equivalences follow:

$\alpha^{\ast \ast}
\leq \alpha^{\ast}$ iff $Cov(f, A, B)+E(f, B)(E(f, A)-r) \geq 0$

\hspace{1.5cm} iff  $r \leq \frac{Cov(f, A, B)+E(f, A)E(f, B)}{E(f, B)}$
\end{proof}

\section{A possibilistic model with probabilistic background risk}

In this section we will study a two-asset model characterized by the following initial data:

$\bullet$ the agent invests the wealth $w_0$ in bonds whose return is $r$ and in stocks whose return is the fuzzy number $A$

$\bullet$ there exists a background risk described by a random variable $Y$

We fix a weighting function $f$. Let $u: {\bf{R}} \rightarrow {\bf{R}}$ be the agent's utility function (of class $C^2$, increasing and concave). Assume that $[A]^\gamma=[a_1(\gamma), a_2(\gamma)]$ for $\gamma \in [0, 1]$. We consider the functions $g_1$ and $h_1$ defined in Section $5$:

$g_1(\alpha, w, x, y)=w+y+\alpha(x-r)$, $h_1(\alpha, w, x, y)=u(g_1(\alpha, w, x, y))$

In case of our model the objective function of the optimization problem will be the following mixed expected utility (corresponding to the mixed vector $(A, Y)$):

(1) $K_2(\alpha, w)=E(f, h_1(\alpha, w, A, Y))$

Then the optimization problem will be

(2) $K_2(\alpha^\Delta, w)=\displaystyle \max_{\alpha} K_2(\alpha, w)$

By Definition 3.6, $K_2(\alpha, w)$ is written

(3) $K_2(\alpha, w)=\frac{1}{2} [M(u(g_1(\alpha, w, a_1(\gamma), Y)))+M(u(g_1(\alpha, w, a_2(\gamma), Y)))]f(\gamma)d\gamma$

A common calculation shows that

(4) $\frac{\partial K_2(\alpha, w)}{\partial \alpha}=\frac{1}{2} \int_0^1 (a_1(\gamma)-r)M(u'(g_1(\alpha, w, a_1(\gamma), Y)))f(\gamma)d(\gamma)+$

$+\frac{1}{2} \int_0^1 (a_2(\gamma)-r)M(u'(g_1(\alpha, w, a_2(\gamma), Y)))f(\gamma)d(\gamma)$

(5)  $\frac{\partial^2 K_2(\alpha, w)}{\partial \alpha^2}=\frac{1}{2} \int_0^1 (a_1(\gamma)-r)^2 M(u''(g_1(\alpha, w, a_1(\gamma), Y)))f(\gamma)d(\gamma)+$

$+\frac{1}{2} \int_0^1 (a_2(\gamma)-r)^2 M(u''(g_1(\alpha, w, a_2(\gamma), Y)))f(\gamma)d(\gamma)$

With (4) and (5) one can prove

\begin{propozitie}
(i) The function $K_2(\alpha, w)$ is concave in $\alpha$.

(ii) The real number $\alpha^\Delta$ is the solution of problem (2) iff $\frac{\partial K_2(\alpha^\Delta, w)}{\partial \alpha}=0$.
\end{propozitie}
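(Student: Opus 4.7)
The plan is to mirror the structure of Proposition 4.1 and Proposition 5.1: deduce concavity from a sign analysis of the second derivative $\partial^2 K_2/\partial \alpha^2$ given by formula (5), and then obtain (ii) as a standard consequence of concavity.

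For part (i), I would start directly from formula (5). The key point is that for each fixed $\gamma \in [0,1]$ and each $\omega \in \Omega$, concavity of $u$ gives $u''(g_1(\alpha, w, a_i(\gamma), Y(\omega))) \leq 0$ for $i=1,2$. Because the probabilistic expectation $M(\cdot)$ preserves the inequality $\leq 0$, we get $M(u''(g_1(\alpha, w, a_i(\gamma), Y))) \leq 0$ for $i=1,2$ and every $\gamma$. Multiplying by the non-negative factors $(a_i(\gamma)-r)^2$ and $f(\gamma)$ keeps the sign, so integration over $[0,1]$ yields $\partial^2 K_2(\alpha, w)/\partial \alpha^2 \leq 0$ for every $\alpha$. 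Hence $K_2(\cdot, w)$ is concave in $\alpha$.

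For part (ii), concavity of $K_2(\cdot, w)$ in $\alpha$ implies that any $\alpha^\Delta$ satisfying the first-order condition $\partial K_2(\alpha^\Delta, w)/\partial \alpha = 0$ is a global maximizer, and conversely any interior maximum must be a critical point since $K_2(\cdot, w)$ is differentiable. This gives the stated equivalence.

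I do not anticipate a substantial obstacle here: the proof is essentially identical in spirit to Proposition 5.1, the only new ingredient being the presence of the probabilistic expectation $M(\cdot)$ inside the integrand. The one subtle point worth a one-line justification is the commutation/sign preservation step showing that $u'' \le 0$ pointwise implies $M(u''(\cdot)) \le 0$; this is just monotonicity of expectation. Differentiation under the integral sign in going from (3) to (4) and (5) is also implicit, and can be justified by the $C^2$ assumption on $u$ together with the boundedness of the supports $[a_1(0), a_2(0)]$ of $A$, but since the paper has used this kind of interchange without comment in Sections 4 and 5, I would simply invoke (4) and (5) as given.
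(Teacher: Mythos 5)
Your proposal is correct and follows exactly the route the paper intends: the paper states Proposition 6.1 without written proof, merely noting that it follows from formulas (4) and (5), and your sign analysis of the second derivative (with the observation that $M(\cdot)$ preserves nonpositivity) together with the standard first-order-condition argument for concave functions is precisely that omitted argument, mirroring the proofs of Propositions 4.1 and 5.1.
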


Let $\alpha^\ast$ be the solution of problem (4) of Section $4$ and $\alpha^\Delta$ the solution of problem (2) from above.

\begin{propozitie}
$\alpha^\Delta \approx \alpha^\ast - \frac{M(Y)(E(f, A)-r)}{Var(f, A)+(E(f, A)-r)^2}$
\end{propozitie}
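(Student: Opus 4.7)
The plan is to imitate the proof of Proposition 5.2, exploiting the fact that now the background risk $Y$ enters only through $M(\cdot)$, independently of the level-set parameter $\gamma$, so its contribution collapses to a single $M(Y)$ factor. I would start from formula (4) for $\partial K_2(\alpha, w)/\partial\alpha$, apply the first-order Taylor expansion $u'(w + Y + \alpha(x-r)) \approx u'(w) + u''(w)[Y + \alpha(x-r)]$ pointwise in $\omega$, and then take the probabilistic expectation. By linearity of $M$, this yields $M(u'(g_1(\alpha, w, x, Y))) \approx u'(w) + u''(w)[M(Y) + \alpha(x-r)]$, a scalar expression in which $x$ will subsequently be instantiated at $a_1(\gamma)$ and $a_2(\gamma)$.

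Substituting this approximation into (4) and grouping terms, I expect to obtain three contributions. The $u'(w)$ part produces $u'(w) I_1$ with $I_1 = \frac{1}{2}\int_0^1[(a_1(\gamma) - r) + (a_2(\gamma) - r)]f(\gamma)d\gamma = E(f,A) - r$, as computed in the proof of Proposition 4.3. The $u''(w) M(Y)$ part factors $M(Y)$ out of the $\gamma$-integral since it is a constant, and then reduces via the same $I_1$ computation to $u''(w) M(Y)(E(f,A)-r)$. The $\alpha u''(w)$ part produces $\alpha u''(w) I_3$ with $I_3 = Var(f,A) + (E(f,A)-r)^2$, again as in Proposition 4.3.

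Assembling these gives $\partial K_2(\alpha, w)/\partial \alpha \approx u'(w)(E(f,A)-r) + u''(w) M(Y)(E(f,A)-r) + \alpha u''(w)[Var(f,A) + (E(f,A)-r)^2]$. By Proposition 6.1(ii), $\alpha^\Delta$ is characterized by the vanishing of this derivative, and solving the resulting linear equation gives $\alpha^\Delta \approx -\frac{u'(w)}{u''(w)} \frac{E(f,A)-r}{Var(f,A)+(E(f,A)-r)^2} - \frac{M(Y)(E(f,A)-r)}{Var(f,A)+(E(f,A)-r)^2}$. Recognizing the first summand as $\alpha^\ast$ via Proposition 4.3 yields the claimed formula.

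I do not expect a serious obstacle, since the machinery essentially replays the earlier arguments. The main conceptual point worth highlighting, and the structural contrast with Proposition 5.2, is that no covariance term analogous to $Cov(f, A, B)$ appears: in the mixed vector $(A, Y)$ the random component $Y$ is independent of the $\gamma$-parameter of $A$, so any cross product integrated against $f(\gamma)d\gamma$ simply separates into $M(Y)$ times a purely possibilistic $(a_i(\gamma) - r)$ integral. Justifying the Taylor step at the same informal level used consistently in Propositions 4.3 and 5.2 is routine.
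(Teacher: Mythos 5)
Your proposal is correct and follows essentially the same route as the paper's own proof: the same first-order Taylor expansion of $u'$ around $w$, the same reduction of the expectation to $u'(w)+u''(w)[M(Y)+\alpha(x-r)]$, and the same integrals $I_1=E(f,A)-r$ and $Var(f,A)+(E(f,A)-r)^2$ from Proposition 4.3. Your closing observation that no covariance term arises because $Y$ is constant in $\gamma$ is a correct reading of why this case is simpler than Proposition 5.2, but the argument itself is the paper's.
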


\begin{proof}
Using the approximation $u'(g_1(\alpha, w, x, y)) \approx u'(w)+(y+\alpha(x-r))u''(w)$ from (4) one obtains

$\frac{\partial K_2(\alpha, w)}{\partial \alpha} \approx \frac{1}{2} \int_0^1 (a_1(\gamma)-r)f(\gamma)M(u'(w)+u''(w)(Y+\alpha(a_1(\gamma)-r)))d\gamma$

$+\frac{1}{2} \int_0^1 (a_2(\gamma)-r)f(\gamma)M(u'(w)+u''(w)(Y+\alpha(a_2(\gamma)-r)))d\gamma$

$=\frac{1}{2} \int_0^1 (a_1(\gamma)-r)f(\gamma)[u'(w)+u''(w)M(Y)+\alpha u''(w) (a_1(\gamma)-r)]d\gamma$

$+\frac{1}{2} \int_0^1 (a_2(\gamma)-r)f(\gamma)[u'(w)+u''(w)M(Y)+\alpha u''(w) (a_2(\gamma)-r)]d\gamma$

Denoting as in the proof of Proposition 4.3

$I_1=\frac{1}{2} \int_0^1 [a_1(\gamma)+a_2(\gamma)-2r]f(\gamma)d\gamma=E(f, A)-r$

$I_2=\frac{1}{2} \int_0^1 [(a_1(\gamma)-r)^2 +(a_2(\gamma)-r)^2]f(\gamma)d\gamma=Var(f, A)+(E(f, A)-r)^2$

then

$\frac{\partial K_2(\alpha, w)}{\partial \alpha} \approx (u'(w)+u''(w)M(Y))I_1+\alpha u''(w)I_2=$

$=[u'(w)+u''(w)M(Y)](E(f, A)-r)+\alpha u''(w)[Var(f, A)+(E(f, A)-r)^2]$

By this last relation the optimal solution of the equation $\frac{\partial K_2(\alpha^\Delta, w)}{\partial \alpha}=0$ is

$\alpha^\Delta \approx - \frac{u'(w)}{u''(w)}\frac{E(f, A)-r}{Var(f, A)+(E(f, A)-r)^2}- \frac{M(Y)(E(f, A)-r)}{Var(f, A)+(E(f, A)-r)^2}$

$=\alpha^\ast- \frac{M(Y)(E(f, A)-r)}{Var(f, A)+(E(f, A)-r)^2}$
\end{proof}

\begin{corolar}
Let $u_1, u_2$ be two unidimensional utility functions with $u_1'>0$, $u_2'>0$, $u_1''<0$, $u_2''<0$ and $\alpha_1^{\Delta}$, $\alpha_2^{\Delta}$ the solutions of problem (2) for $u_1$ and $u_2$. If the agent $u_1$ is more risk averse than $u_2$ then $\alpha_1^{\Delta} \leq \alpha_2^{\Delta}$.
\end{corolar}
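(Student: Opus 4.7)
The plan is to reduce this corollary to Corollary 4.4 via the approximation formula established in Proposition 6.2, mimicking exactly the strategy used to prove Corollary 5.3. The key observation is that in the approximate expression
\[
\alpha^\Delta \approx \alpha^\ast - \frac{M(Y)(E(f, A)-r)}{Var(f, A)+(E(f, A)-r)^2},
\]
the correction term depends only on the data $A$, $Y$, $r$, $f$, and is completely independent of the utility function $u$. Therefore the dependence on $u$ of $\alpha^\Delta$ is entirely carried by $\alpha^\ast$.

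First I would apply Proposition 6.2 separately to $u_1$ and to $u_2$, obtaining
\[
\alpha_1^\Delta \approx \alpha_1^\ast - C, \qquad \alpha_2^\Delta \approx \alpha_2^\ast - C,
\]
where $C = \frac{M(Y)(E(f, A)-r)}{Var(f, A)+(E(f, A)-r)^2}$ is the same constant in both cases. Next, since $u_1$, $u_2$ are strictly increasing and strictly concave, Corollary 4.4 applies to the problem (4) of Section 4 and yields $\alpha_1^\ast \leq \alpha_2^\ast$ under the risk-aversion hypothesis $r_{u_1}(w) \geq r_{u_2}(w)$. Subtracting the common constant $C$ from both sides preserves the inequality, giving $\alpha_1^\Delta \leq \alpha_2^\Delta$.

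There is no real obstacle here: the corollary is a near-immediate consequence of the structural form of the Proposition 6.2 approximation combined with Corollary 4.4. The only thing one might want to comment on is that the inequality is understood in the approximate sense inherited from the first-order Taylor expansion of $u'$ used in the proof of Proposition 6.2, just as in Corollary 5.3. Accordingly, the proof can be written in a single line citing Proposition 6.2 and Corollary 4.4, in perfect parallel with the proof of Corollary 5.3.
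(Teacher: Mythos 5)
Your proposal matches the paper's proof exactly: the paper also derives the corollary by combining Proposition 6.2 (whose correction term is independent of the utility function) with Corollary 4.4. The argument is correct and the level of detail you give is, if anything, more explicit than the paper's one-line citation.
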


\begin{proof}
By Proposition 6.2 and Corollary 4.4.
\end{proof}

\begin{corolar}
If the Arrow--Pratt index of the utility function $f$ is decreasing then $\alpha^\Delta(w)$ is increasing in wealth.
\end{corolar}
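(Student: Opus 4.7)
The plan is to read off the result directly from the approximate formula established in Proposition 6.2, reducing it to the corresponding statement for the background-risk-free model, Corollary 4.5.

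First I would write out the $w$-dependence of the two terms in Proposition 6.2:
$$\alpha^\Delta(w) \approx \alpha^\ast(w) - \frac{M(Y)(E(f,A)-r)}{Var(f,A)+(E(f,A)-r)^2}.$$
The fuzzy number $A$, the background random variable $Y$, and the risk-free return $r$ are all data of the model and do not depend on the wealth $w$. Therefore the second term is a constant in $w$, and the monotonicity of $\alpha^\Delta(w)$ in $w$ coincides with the monotonicity of $\alpha^\ast(w)$ in $w$.

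Next I would invoke Corollary 4.5: under the assumption that the Arrow--Pratt index $r_u(w) = -u''(w)/u'(w)$ is decreasing in $w$, the approximate optimal allocation $\alpha^\ast(w) \approx \frac{1}{r_u(w)} \frac{E(f,A)-r}{Var(f,A)+(E(f,A)-r)^2}$ from Proposition 4.3 is increasing in $w$ (this uses, in the usual interpretation of the model, that $E(f,A) > r$, so that the factor $1/r_u(w)$ enters with positive sign and its monotone increase transfers to $\alpha^\ast(w)$). Adding the $w$-independent constant does not affect monotonicity, so $\alpha^\Delta(w)$ is also increasing in $w$.

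There is no real obstacle here; the statement is essentially a corollary of a corollary. The only thing to notice is the typo in the statement (``utility function $f$'' should read ``utility function $u$''), and the tacit convention, already in force in Corollary 4.5, that $E(f,A) - r > 0$ so that an increase of the Arrow--Pratt tolerance $1/r_u(w)$ with wealth actually translates into a larger investment in the risky asset.
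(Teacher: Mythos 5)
Your proof is correct and follows exactly the paper's route: the paper's own proof is simply ``By Proposition 6.2 and Corollary 4.5,'' i.e.\ the second term in the approximation of $\alpha^\Delta(w)$ is $w$-independent, so the monotonicity reduces to that of $\alpha^\ast(w)$. Your additional remarks --- the typo ($f$ for $u$) and the tacit assumption $E(f,A)-r>0$ needed for Corollary 4.5 itself --- are accurate observations about conventions the paper leaves implicit.
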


\begin{proof}
By Proposition 6.2 and Corollary 4.5.
\end{proof}

\begin{corolar}
$\alpha^\ast \leq \alpha^\Delta$ iff $M(Y)(E(f, A)-r) \leq 0$.
\end{corolar}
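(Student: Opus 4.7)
The plan is to read off the result directly from Proposition 6.2, which gives the approximation
$$\alpha^\Delta \approx \alpha^\ast - \frac{M(Y)(E(f,A)-r)}{Var(f,A)+(E(f,A)-r)^2}.$$
Rearranging, the (approximate) difference is
$$\alpha^\Delta - \alpha^\ast \approx -\frac{M(Y)(E(f,A)-r)}{Var(f,A)+(E(f,A)-r)^2}.$$
The key observation is that the denominator is nonnegative, since $Var(f,A) \geq 0$ and $(E(f,A)-r)^2 \geq 0$; assuming $A$ is not a fuzzy point concentrated at $r$, it is in fact strictly positive. Therefore the sign of $\alpha^\Delta - \alpha^\ast$ is opposite to the sign of the numerator $M(Y)(E(f,A)-r)$.

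From this sign analysis the equivalence is immediate: $\alpha^\ast \leq \alpha^\Delta$ holds iff $\alpha^\Delta - \alpha^\ast \geq 0$, which (dividing by the positive denominator) holds iff $M(Y)(E(f,A)-r) \leq 0$. This is exactly a parallel of the style already used in Proposition 5.5 of the paper, where an analogous sign argument is applied to the approximate expression of Proposition 5.2.

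There is essentially no obstacle beyond noting positivity of the denominator; the only subtlety is that Proposition 6.2 is an approximate relation rather than an exact one, so the corollary must be read as a statement about the first--order Taylor approximation of the optimal solution. With that caveat the proof is a one--line consequence of Proposition 6.2, so I would simply write: ``By Proposition 6.2, since $Var(f,A)+(E(f,A)-r)^2 \geq 0$, we have $\alpha^\ast \leq \alpha^\Delta$ iff $M(Y)(E(f,A)-r) \leq 0$.''
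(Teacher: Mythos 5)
Your proof is correct and follows exactly the route the paper intends: the paper states this corollary without proof, but its proof of the analogous Proposition 5.5 is precisely your sign argument applied to the approximate formula (there Proposition 5.2, here Proposition 6.2), noting that the common denominator $Var(f,A)+(E(f,A)-r)^2$ is nonnegative. Your added remarks -- that the denominator should really be strictly positive to divide, and that the equivalence holds only at the level of the first-order approximation -- are sound and slightly more careful than the paper itself.
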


\section{A probabilistic model with possibilistic background risk}

The two-asset model discussed in this section has the following features:

$\bullet$ the agent invests the wealth $w_0$ in bonds with return $r$ and in stocks with return the random variable $X$

$\bullet$ the background risk is represented by a fuzzy number $B$

We fix a weighting function $f$. Let $u: {\bf{R}} \rightarrow {\bf{R}}$ be the agent's utility function (of class $C^2$, increasing and concave). Assume that $[B]^\gamma=[b_1(\gamma), b_2(\gamma)]$ for $\gamma \in [0, 1]$. We consider the functions

$g_1(\alpha, w, x, y)=w+y+\alpha(x-r)$, $h_1(\alpha, w, x, y)=u(g_1(\alpha, w, x, y))$.

$h_1(\alpha, w, ., .)$ is a bidimensional utility function and $(X, B)$ is a mixed vector. The objective function of the optimization problem of this model will be the following mixed expected utility:

(1) $K_3(\alpha, w)=E(f, h_1(\alpha, w, X, B))$

The the optimization problem associated with the portfolio $(w_0-\alpha, \alpha)$ is

(2) $K_3(\alpha^\bigtriangledown, w)=\displaystyle \max_{\alpha} K_3(\alpha, w)$

By Definition 3.6, $K_3(\alpha, w)$ has the following expression

$K_3(\alpha, w)=\frac{1}{2} \int_0^1 [M(h_1(\alpha, w, X, b_1(\gamma))+M(h_1(\alpha, w, X, b_2(\gamma))]f(\gamma)d\gamma$

$=\frac{1}{2} \int_0^1 [M(u(g_1(\alpha, w, X, b_1(\gamma))+M(u(g_1(\alpha, w, X, b_2(\gamma))]f(\gamma)d\gamma$

From this equality it follows immediately:

(3) $\frac{\partial K_3(\alpha, w)}{\partial \alpha}=\frac{1}{2} \int_0^1 M[u'(g_1(\alpha, w, X, b_1(\gamma)))(X-r)]f(\gamma)d\gamma+$

$+\frac{1}{2} \int_0^1 M[u'(g_1(\alpha, w, X, b_2(\gamma)))(X-r)]f(\gamma)d\gamma$

(4) $\frac{\partial^2 K_3(\alpha, w)}{\partial \alpha^2}=\frac{1}{2} \int_0^1 M[u''(g_1(\alpha, w, X, b_1(\gamma)))(X-r)^2]f(\gamma)d\gamma+$

$+\frac{1}{2} \int_0^1 M[u''(g_1(\alpha, w, X, b_2(\gamma)))(X-r)^2]f(\gamma)d\gamma$

Taking into consideration (3), (4) and that $u$ is concave one can easily prove

\begin{propozitie}
(i) The function $K_3(\alpha, w)$ is concave in $\alpha$.

(ii) The real number $\alpha^\bigtriangledown$ is the solution of (2) iff $\frac{\partial K_3(\alpha^\bigtriangledown, w)}{\partial \alpha}=0$.
\end{propozitie}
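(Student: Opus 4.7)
The plan is to imitate, one more time, the template used in Propositions 4.1, 5.1 and 6.1: part (i) is a sign analysis of $\partial^2 K_3(\alpha,w)/\partial\alpha^2$ read directly off formula (4), and part (ii) is then the standard first--order characterization of a global maximizer of a concave function of one variable, so it is an immediate consequence of (i).

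For (i), I would start from the expression (4) for $\partial^2 K_3(\alpha,w)/\partial\alpha^2$. Because $u$ is of class $C^2$ and concave, $u''\leq 0$ on $\mathbf{R}$, and for every $\alpha$, $\gamma\in[0,1]$ and $\omega\in\Omega$ one has $(X(\omega)-r)^2\geq 0$, hence both random variables $u''(g_1(\alpha,w,X,b_1(\gamma)))(X-r)^2$ and $u''(g_1(\alpha,w,X,b_2(\gamma)))(X-r)^2$ are pointwise nonpositive on $\Omega$. By monotonicity of the probabilistic expectation $M$, their means $M[u''(g_1(\alpha,w,X,b_i(\gamma)))(X-r)^2]$ are $\leq 0$ for $i=1,2$ and every $\gamma$. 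Since the weighting function satisfies $f(\gamma)\geq 0$, each of the two $\gamma$--integrals on the right-hand side of (4) is $\leq 0$, so $\partial^2 K_3(\alpha,w)/\partial\alpha^2\leq 0$ for all $\alpha\in\mathbf{R}$, which is precisely concavity of $K_3(\cdot,w)$. For (ii), once concavity in $\alpha$ is established, the stationary point of $K_3(\cdot,w)$, if it exists, is a global maximizer, and any interior global maximizer is stationary; this yields the equivalence ``$\alpha^\bigtriangledown$ solves (2)'' $\Longleftrightarrow$ $\partial K_3(\alpha^\bigtriangledown,w)/\partial\alpha=0$.

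I do not expect a real obstacle: the novelty compared to Section 5 is only that an expectation $M$ now sits under the $\gamma$--integral in (4), but the pointwise sign argument goes through verbatim because $M$ preserves the nonpositivity of a random variable, and no interchange of $\int_0^1$ and $M$ is required since (4) is already stated in that order. The only point worth flagging is that differentiation under the expectation used to obtain (3) and (4) from the integral form of $K_3$ requires enough regularity on $u$ and $X$ to justify it, but this is the same standing assumption used implicitly in Section 6, and it is automatic once $u\in C^2$ and the integrands are dominated on bounded $\alpha$--intervals.
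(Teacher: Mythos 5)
Your proposal is correct and matches the paper's intended argument exactly: the paper states Proposition 7.1 with only the remark that it follows from (3), (4) and the concavity of $u$, and the argument it has in mind is precisely the sign analysis of the second derivative that you spell out (as in the explicit proof of Proposition 4.1). Your additional observations about monotonicity of $M$ and differentiation under the expectation are sound but go beyond what the paper records.
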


\begin{propozitie}
An approximate value of the solution $\alpha^\bigtriangledown$ of problem (2) is given by

(5) $\alpha^\bigtriangledown \approx -\frac{u'(w)}{u''(w)} \frac{M(X)-r}{M[(X-r)^2]}-\frac{E(f, B)(M(X)-r)}{M[(X-r)^2]}$.
\end{propozitie}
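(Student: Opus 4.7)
The plan is to mirror the Taylor-expansion strategy already used in the proofs of Propositions 4.3, 5.2 and 6.2, adapted to the mixed vector $(X,B)$. By Proposition 7.1(ii), the first task is to obtain a tractable approximation of $\frac{\partial K_3(\alpha,w)}{\partial \alpha}$ and then to solve the resulting linear-in-$\alpha$ equation.

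First I would write the first--order Taylor approximation of $u'$ around $w$,
$$u'(g_1(\alpha,w,x,y)) = u'(w+y+\alpha(x-r)) \approx u'(w) + u''(w)[y+\alpha(x-r)],$$
and substitute it inside the expectations appearing in (3). After pulling the deterministic quantities $u'(w)$, $u''(w)$, $b_i(\gamma)$ and $\alpha$ out of the expectation operator $M$, each integrand becomes an affine combination of $M(X-r)=M(X)-r$ and $M[(X-r)^2]$, with coefficients depending on $b_i(\gamma)$.

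Next I would evaluate the two resulting $\gamma$-integrals. The contributions not involving $b_i(\gamma)$ combine into $u'(w)(M(X)-r)+\alpha u''(w)M[(X-r)^2]$; the contributions carrying $b_i(\gamma)$ combine into $u''(w)(M(X)-r)\cdot\tfrac{1}{2}\int_0^1[b_1(\gamma)+b_2(\gamma)]f(\gamma)d\gamma$, which by Definition 2.2 is exactly $u''(w)(M(X)-r)E(f,B)$. Adding these pieces gives
$$\frac{\partial K_3(\alpha,w)}{\partial \alpha}\approx u'(w)(M(X)-r)+u''(w)E(f,B)(M(X)-r)+\alpha u''(w)M[(X-r)^2].$$

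Finally, setting this approximate derivative to zero and solving for $\alpha$ yields exactly (5). The step I expect to be the only mildly delicate one is the interchange of $M$ with the deterministic factors and the bookkeeping of the $b_i(\gamma)$ terms, since here $X$ lives inside $M$ while $b_i(\gamma)$ does not; everything else is a direct analogue of the computations done earlier in the paper, and no new tools are required beyond Proposition 7.1(ii) and Definition 2.2.
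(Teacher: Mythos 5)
Your proposal is correct and follows essentially the same route as the paper's own proof: the same first--order Taylor expansion of $u'$ around $w$, the same separation of the terms with and without $b_i(\gamma)$ inside the expectation $M$, the identification of $\tfrac12\int_0^1[b_1(\gamma)+b_2(\gamma)]f(\gamma)d\gamma$ with $E(f,B)$, and solving the resulting linear equation in $\alpha$. No differences worth noting.
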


\begin{proof}
We will use here the approximation $u'(g_1(\alpha, w, x, y))\approx u'(w)+(y+(x-r)\alpha)u''(w)$. Then

$M[u'(g_1(\alpha, w, X, b_1(\gamma)))(X-r)]\approx$

$\approx M[u'(w)(X-r)+u''(w)b_1(\gamma)(X-r)+\alpha u''(w)(X-r)^2]=$

$=u'(w)(M(X)-r)+u''(w)b_1(\gamma)(M(X)-r)+\alpha u''(w)M[(X-r)^2]$

and analogously

$M[u'(g_1(\alpha, w, X, b_2(\gamma)))(X-r)]\approx$

$\approx u'(w)(M(X)-r)+u''(w)b_2(\gamma)(M(X)-r)+\alpha u''(w)M[(X-r)^2]$

By (4) it follows

$\frac{\partial K_3(\alpha, w)}{\partial \alpha}\approx u'(w)(M(X)-r)+u''(w)(M(X)-r)\int_0^1 \frac{b_1(\gamma)+b_2(\gamma)}{2}f(\gamma)d\gamma+$

$+\alpha u''(w)M[(X-r)^2]$

In conclusion

$\frac{\partial K_3(\alpha, w)}{\partial \alpha}\approx u'(w)(M(X)-r)+u''(w)E(f, B)(M(X)-r)\alpha u''(w)M[(X-r)^2]$

Then the equation $\frac{\partial K_3(\alpha^\bigtriangledown, w)}{\partial \alpha}=0$ has the approximate solution

$\alpha^\bigtriangledown \approx -\frac{u'(w)}{u''(w)} \frac{M(X)-r}{M[(X-r)^2]}-\frac{E(f, B)(M(X)-r)}{M[(X-r)^2]}$.
\end{proof}

\begin{remarca}
Using the Arrow--Pratt index $r_u(w)$, (5) can be written

(6) $\alpha^\bigtriangledown = \frac{1}{r_u(w)}\frac{M(X)-r}{M[(X-r)^2]}-\frac{E(f, B)(M(X)-r)}{M[(X-r)^2]}$.
\end{remarca}

\begin{corolar}
Let $u_1, u_2$ be two unidimensional utility functions with $u_1'>0$, $u_2'>0$, $u_1''<0$, $u_2''<0$ and $\alpha_1^{\bigtriangledown}$, $\alpha_2^{\bigtriangledown}$ the solutions of problem (2) for $u_1$ and $u_2$. If the agent $u_1$ is more risk averse than $u_2$ then $\alpha_1^{\bigtriangledown} \leq \alpha_2^{\bigtriangledown}$.
\end{corolar}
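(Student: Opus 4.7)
The plan is to reduce Corollary 7.4 to the approximate formula obtained in Proposition 7.2, exactly as was done for the analogous comparisons in Corollaries 4.4, 5.3 and 6.3. First I would write, for $i=1,2$, the approximation obtained by applying Proposition 7.2 (in the Arrow--Pratt form (6) of Remark 7.3) to each agent:
\[
\alpha_i^\bigtriangledown \;\approx\; \frac{1}{r_{u_i}(w)}\,\frac{M(X)-r}{M[(X-r)^2]} \;-\; \frac{E(f,B)\,(M(X)-r)}{M[(X-r)^2]}.
\]
The second summand depends only on the market data $r$, $X$, $B$ and on the weighting function $f$, not on the utility function, so it is identical for $i=1$ and $i=2$ and cancels when $\alpha_1^\bigtriangledown$ and $\alpha_2^\bigtriangledown$ are compared.

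Next I would invoke the definition of ``more risk--averse'' recalled at the end of Section~4: the hypothesis that $u_1$ is more risk--averse than $u_2$ means $r_{u_1}(w) \geq r_{u_2}(w)$ for all $w$, and hence $\tfrac{1}{r_{u_1}(w)} \leq \tfrac{1}{r_{u_2}(w)}$. Subtracting the two approximations leaves only the first term,
\[
\alpha_2^\bigtriangledown - \alpha_1^\bigtriangledown \;\approx\; \left(\frac{1}{r_{u_2}(w)} - \frac{1}{r_{u_1}(w)}\right)\frac{M(X)-r}{M[(X-r)^2]},
\]
where the parenthesized factor is nonnegative. Under the natural assumption that investing in stocks is meaningful, namely $M(X) \geq r$ (the probabilistic analogue of the condition $r < E(f,A)$ used in Proposition 4.2(ii)), together with $M[(X-r)^2] \geq 0$, the right--hand side is nonnegative, which gives $\alpha_1^\bigtriangledown \leq \alpha_2^\bigtriangledown$ as claimed.

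The main obstacle is not technical: the substantive calculation has already been absorbed into Proposition 7.2, and the present argument is a one--line consequence that mirrors the proofs of Corollaries 4.4, 5.3 and 6.3. The only point worth flagging is the implicit sign convention $M(X) \geq r$; without it the approximation itself still holds, but the direction of the inequality between $\alpha_1^\bigtriangledown$ and $\alpha_2^\bigtriangledown$ could reverse. In keeping with the style of the earlier corollaries, the written proof would simply read ``By Proposition 7.2 and Corollary 4.4'', with the sign condition understood from the modelling context.
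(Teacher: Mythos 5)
Your proposal is correct and follows exactly the route the paper intends: the paper leaves Corollary 7.4 without an explicit proof, but by analogy with Corollaries 4.4, 5.3 and 6.3 the intended argument is precisely to apply the approximate formula of Proposition 7.2 (in the Arrow--Pratt form of Remark 7.3), observe that the background-risk term is utility-independent, and use $r_{u_1}(w)\geq r_{u_2}(w)$. Your remark that the sign condition $M(X)\geq r$ is silently assumed is a fair observation — the same implicit assumption ($E(f,A)\geq r$) underlies the paper's own Corollary 4.4 — but it does not change the fact that your argument matches the paper's.
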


\section{Conclusions}
This paper has presented three investment models with two risk components: the investment risk (stocks) and background risk (labour income). Both the investment risk and the background risk can be probabilistic by random variables and possibilistic by fuzzy numbers.

The probabilistic approach from \cite{eeckhoudt2} assumes that both types of risk are random variables. The models of this paper cover the other three possibilities.

The first model is entirely possibilistic: both types of risk are fuzzy numbers. The optimization problem is formulated in terms of the notion of possibilistic expected utility of \cite{georgescu3}.

The other two models are mixed: a type of risk is probabilistic (random variable), the other is possibilistic (fuzzy number). The optimization problems are here formulated by the notion of mixed expected utility of \cite{georgescu4}.

For all three models approximate calculations formulas of the optimal solutions have been proved. These are expressed in terms of the Arrow--Pratt index of the utility function of the investor \cite{arrow}, \cite{mossin} and some possibilistic or probabilistic indicators (expected value, variance, covariance). These insure computation efficiency, but the error evaluation due to the approximation remains an open problem.

Finally, an application of these models to significant real situations remains the topic for possible future research. This would lead to a comparison of the four investment models, which might suggest to an investor the modality to use one of them.

\end{document}